\newcommand{\restrict}{\upharpoonright}
\newcommand{\tto}{\Rightarrow}
\newcommand{\intr}[1]{\llbracket #1 \rrbracket}
\newcommand{\lpair}[1]{\langle #1 \rangle}
\newcommand{\leg}[1]{\mathcal{L}_{#1}}
\newcommand{\one}{\mathbf{1}}
\newcommand{\unfold}[1]{\widetilde{A}}
\newcommand{\Fam}{\mathrm{Fam_f}}
\newcommand{\BFam}{\mathrm{Fam}}
\newcommand{\enb}[1]{\mathrel{\vdash_{#1}}}
\newcommand{\var}[1]{\mathtt{var}[#1]}
\newcommand{\sleq}{\preccurlyeq}
\newcommand{\comp}{\mathrm{comp}}
\newcommand{\obleq}{\leq}
\newcommand{\obeq}{\cong}
\newcommand{\threads}[1]{\mathcal{T}_{#1}}
\newcommand{\prethreads}[1]{\threads{#1}'}
\newcommand{\ip}{\mathrm{ip}}
\newcommand{\jp}{\mathrm{jp}}
\newcommand{\iso}{\simeq}
\newcommand{\eval}{\Downarrow}
\newcommand{\obs}{\cong}
\newcommand{\thread}[1]{\lceil #1 \rceil} 
\newcommand{\preleg}[1]{\leg{A}'}
\newcommand{\C}{\mathcal{C}}
\newcommand{\D}{\mathcal{D}}
\newcommand{\Gam}{\mathbf{Gam}}
\newcommand{\biggam}{\mathbf{Gam}_\infty}
\newcommand{\Vis}{\mathbf{Vis}}
\newcommand{\Inn}{\mathbf{Inn}}
\newcommand{\Path}{\mathbf{Path}}
\newcommand{\Seq}{\mathbf{Seq}}
\newcommand{\Jus}{\mathbf{Jus}}
\newcommand{\Set}{\mathbf{Set}}
\newcommand{\Lang}{\mathcal{L}}
\newtheorem{lemma}{\textsc{Lemma}}
\newtheorem{theorem}{\textsc{Theorem}}
\newtheorem{definition}{\textsc{Definition}}
\newtheorem{proposition}{\textsc{Proposition}}
\newlength{\viewht}
\newlength{\viewlift}
\newlength{\viewdp}
\newlength{\viewdrop}
\newcommand{\pview}[1]{
\settoheight{\viewht}{\makebox{$#1$}}
\setlength{\viewlift}{\viewht}%
\addtolength{\viewlift}{-1ex}%
\raisebox{0.3\viewlift}{
  \makebox{$\ulcorner$}}
  \!#1\!
\settoheight{\viewht}{\makebox{$#1$}}
\setlength{\viewlift}{\viewht}%
\addtolength{\viewlift}{-1ex}%
\raisebox{0.3\viewlift}{
  \makebox{$\urcorner$}}
}
\begin{document}
\title{Isomorphisms of types\\ in the presence of higher-order references}

\author{\IEEEauthorblockN{Pierre Clairambault}\\
University of Bath\\
\texttt{p.clairambault@bath.ac.uk}}
\maketitle

\begin{abstract}
We investigate the problem of type isomorphisms in a programming language with higher-order references. We first recall
the game-theoretic model of higher-order references by Abramsky, Honda and McCusker. Solving an open problem by Laurent,
we show that two finitely branching arenas are isomorphic if and only if they are geometrically the same, up to renaming of moves
(Laurent's forest isomorphism). We deduce from this an equational theory characterizing isomorphisms of types in a finitary language
$\Lang_2$ with higher order references. We show however that Laurent's conjecture does not hold on infinitely branching arenas,
yielding a non-trivial type isomorphism in the extension of $\Lang_2$ with natural numbers.
\end{abstract}

\section{Introduction}

During the development of denotational semantics of programming languages, there was a crucial interest in defining models of computation
satisfying particular type equations. For instance, a model of the untyped $\lambda$-calculus can be obtained by isolating a \emph{reflexive} object
(that is, an object $D$ such that $D\iso D^D$) in a cartesian closed category. In the 80s, some people started to consider the dual problem
of finding these equations that must hold in \emph{every} model of a given language: they were coined \emph{type isomorphisms} by Bruce and Longo. In \cite{DBLP:conf/stoc/BruceL85}, 
they exploited a theorem by Dezani \cite{dezani1976characterization} giving a syntactic characterization of invertible terms in the untyped $\lambda$-calculus
to prove that that the only isomorphisms of types present in simply typed $\lambda$-calculus with respect to $\beta\eta$ equality are those induced by 
the equation $A\to (B\to C) \iso B\to (A\to C)$. Later this was extended to handle such things as products \cite{DBLP:journals/mscs/BruceCL92}, higher order \cite{DBLP:conf/stoc/BruceL85},
possibly with unit types \cite{diinvertibility}, or sums \cite{DBLP:journals/apal/FioreCB06}.

The interest in type isomorphisms grew significantly when their practical impact was realized. In \cite{DBLP:journals/jfp/Rittri91}, Rittri proposed to search functions
in software libraries using their type modulo isomorphism as a key. He also considered the possibilities offered by matching and unification of types
modulo isomorphisms \cite{DBLP:journals/ita/Rittri93}. A whole line of research has also been dedicated to the study of type isomorphisms and their use for search tools in richer type systems
(such as dependent types \cite{DBLP:conf/fossacs/BartheP01}), along with studies about the automatic generation of the corresponding coercions \cite{DBLP:conf/mpc/AtanassowJ04}.
Such tools were implemented for several programming languages, let us mention the command line tool
\texttt{camlsearch} written by Vouillon for CamlLight.

It is worth noting that even though these tools are written for powerful programming languages featuring
complex computational effects such as higher-order references or exceptions, they rely on the theory of isomorphisms in weaker (purely functional) languages, such as the second-order $\lambda$-calculus
with pairs and unit types for \texttt{camlsearch}. Clearly, all type isomorphisms in $\lambda$-calculus are still valid in the presence of computational effects (indeed, the operational semantics are compatible with $\beta\eta$). What is
less clear is whether those effects allow the definition of new isomorphisms. However, it seems that syntactic methods deriving from Dezani's theorem on invertible terms
in $\lambda$-calculus cannot be extended to complex computational effects. The base setting itself is completely different: the dynamics of terms are no longer defined by reduction
rules but by operational semantics, the natural equality between terms is no longer convertibility but observational equivalence, so new methods are required.

In \cite{DBLP:journals/mscs/Laurent05}, Laurent introduced the idea of applying game semantics to the study of type isomorphisms (although one should
mention the precursor characterization of isomorphisms by Berry and Curien \cite{berry-curien} in the category of concrete data structures and sequential algorithms).
Exploiting his earlier work on game semantics for polarized linear logic \cite{DBLP:journals/apal/Laurent04}, he found the theory of isomorphisms for LLP from
which he deduced (by translations) the isomorphisms for the call-by-name and call-by-value $\lambda\mu$-calculus. The core of his analysis is the
observation that isomorphisms between arenas $A$ and $B$ in the category $\Inn$ \cite{hyland-ong} of arenas and innocent strategies are in one-to-one correspondence with
\emph{forest isomorphisms} between $A$ and $B$, so in particular two arenas are isomorphic if and only if their representations as forests are identical up to the renaming of vertices.

From the point of view of computational effects this looks promising, since game semantics are known to accommodate several computational effects such as control operators \cite{laird97},
ground type \cite{abramsky-mccusker:active-algol} or higher-order references \cite{ahm} or even concurrency \cite{DBLP:conf/concur/Laird05} in one single framework. Moreover,
Laurent pointed out in \cite{DBLP:journals/mscs/Laurent05} that the main part of his result, namely the fact that each $\Inn$-isomorphism induces a forest isomorphism, does not really depend on the 
innocence hypothesis but only on the weaker \emph{visibility} condition. As a consequence, his method for characterizing isomorphisms still applies to programming languages such as Idealized Algol whose
terms can be interpreted as visible strategies \cite{abramsky-mccusker:active-algol}. Laurent raised the question whether his result could be proved without the visibility condition,
therefore yielding a characterization of isomorphisms in a programming language whose terms have access to higher-order references and hence get interpreted as non-visible strategies \cite{ahm}.

The contribution of this paper is threefold: \emph{(1)} We give a new and synthetic reformulation of Laurent's tools to approach game-theoretically the problem of type isomorphisms,
\emph{(2)} We prove Laurent's conjecture in the case of finitely branching arenas, allowing us to characterize all type isomorphisms in a finitary (integers-free)
programming language $\Lang_2$ with higher-order references by the theory $\mathcal{E}$ presented\footnote{The absence of the equation $A\to (B\to C) \iso B \to (A\to C)$ mentioned in the introduction may seem strange, but is standard in call-by-value \cite{DBLP:journals/mscs/Laurent05} due to the restriction of the
$\eta$-rule on values.} in Figure \ref{equational_theory}, \emph{(3)} We show however a counter-example to
the conjecture when dealing with infinitely branching arenas, and the counter-example yields a non-trivial type isomorphism
between the types $(\mathtt{nat} \to \mathtt{unit}) \to (\mathtt{nat} \to \mathtt{unit}) \to \mathtt{unit}$ and
$(\mathtt{nat} \to \mathtt{unit}) \to (\mathtt{unit} \to \mathtt{unit}) \to \mathtt{unit}$
in the extension of $\Lang_2$ with natural numbers. So Laurent's conjecture, in the general case, is false.

In Section \ref{section_lang} we introduce the finitary language $\Lang_2$ strongly inspired by Abramsky, Honda and McCusker's language $\Lang$ \cite{ahm}, along with
its standard game semantics. Then we turn to the problem of isomorphisms of types. In Section \ref{section_isomorphisms} we first give an analysis of isomorphisms
in several subcategories of the games model, reproving and extending Laurent's theorem, then we use it to characterize isomorphisms in $\Lang_2$. We show how this
characterization fails in the presence of natural numbers, and we give a non-trivial type isomorphism in $\Lang$.

\begin{figure}
\begin{eqnarray*}
A\times B &\iso_{\mathcal{E}} & B \times A\\
A\times (B\times C) &\iso_{\mathcal{E}}& (A\times B)\times C\\
A\times \mathtt{unit} &\iso_{\mathcal{E}}& A\\
\mathtt{bool}\times A \to B &\iso_{\mathcal{E}}& (A\to B) \times (A\to B)\\
\mathtt{var}[A] &\iso_\mathcal{E}& (A\to \mathtt{unit})\times (\mathtt{unit} \to A)
\end{eqnarray*}
\caption{Isomorphisms in $\Lang_2$}
\label{equational_theory}
\end{figure}

\section{The language $\Lang_2$ and its game semantics}
\label{section_lang}
\subsection{Definition of $\Lang_2$}
\paragraph{Basic definitions}
We introduce here a finitary variant $\Lang_2$ of the programming language $\Lang$ with higher-order references modeled by Abramsky, Honda and McCusker in \cite{ahm}:
it only differs from $\Lang$ in the fact that the type of natural numbers has been replaced with a type for booleans, along with all the associated combinators.
The terms and types of $\Lang_2$ are given by the following grammars.
\begin{eqnarray*}
A&::=& \mathtt{unit}~|~\mathtt{bool}~|~A\times A~|~A\to A~|~\mathtt{var}[A]\\\\
M &::=& x~|~\lambda x.M~|~M~M~|~\lpair{M, M}~|~\mathtt{fst}~M~|~\mathtt{snd}~M\\
&&|~\mathtt{skip}~|~\mathtt{true}~|~\mathtt{false}~|~\mathtt{if}~M~M~M\\
&&|~\mathtt{new}_A~|~M:=M~|~!M~|~\mathtt{mkvar}~M~M
\end{eqnarray*}
The typing rules for $\lambda$-calculus, pairs and booleans are standard. The rules for references follow.
\[
\prooftree
        \justifies
        \Gamma \vdash \mathtt{new}_A : \mathtt{var}[A]
\endprooftree
~~~~~~~~~~~~
\prooftree
        \Gamma \vdash M:\mathtt{var}[A]
        \justifies
        \Gamma\vdash !M : A
\endprooftree
\]
\[
\prooftree
        \Gamma \vdash M:\mathtt{var}[A]~~~~\Gamma\vdash N:A
        \justifies
        \Gamma\vdash M:=N : \mathtt{unit}
\endprooftree
\]
\[
\prooftree
        \Gamma\vdash M:A \to \mathtt{unit}~~~~\Gamma\vdash N:\mathtt{unit}\to A
        \justifies
        \Gamma\vdash \mathtt{mkvar}~M~N: \mathtt{var}[A]
\endprooftree
\]

This language is equipped with a standard big-step call-by-value operational semantics. To define it, we temporarily extend the syntax of terms with identifiers
for \textbf{locations}, denoted by $l$. Then, \textbf{values} are formed as follows:

\[
V ::= \mathtt{skip}~|~\mathtt{true}~|~\mathtt{false}~|~\lambda x.M~|~\lpair{V, V}~|~l~|~\mathtt{mkvar}~V~V
\]

The operational semantics of $\Lang_2$ are then given as an inductively generated relation $(L, s) M \eval (L', s') V$, where
$s$ is a partial map from locations in $L$ to values. The rules for $\lambda$-calculus, products and booleans are standard (they do not affect the store) and we give in Figure \ref{opsem}
the rules for references. Note that as usual, some store annotations are omitted to aid readability;
the rules can be disambiguated as explained in \cite{ahm}. For a closed term $M$ without free locations, we write
$M\eval$ to indicate that $(\emptyset, \emptyset) M \eval (L, s) V$ for some $L, s$ and $V$. The observational preorder $M \obleq N$ between terms $M$ and $N$ is then defined as usual, by requiring that 
for all contexts $C[-]$ such that $C[M]$ and $C[N]$ are closed and contain no free location, if $C[M]\eval$ then $C[N]\eval$. The corresponding equivalence relation is denoted by $\obs$.

\begin{figure*}
{\footnotesize
\[
\prooftree
        M_1 \eval V_1~~~~M_2\eval V_2
        \justifies
        \mathtt{mkvar}~M_1~M_2\eval \mathtt{mkvar}~V_1~V_2
\endprooftree
~~~~~~~~
\prooftree
        \justifies
        (L, s)~\mathtt{new}_A \eval (L\cup \{l:A\}, s)~l
        \using (l\not \in L)
\endprooftree
~~~~~~~~
\prooftree
        (L,s)M\eval (L', s')l~~~~(L', s')N\eval (L'', s'')V
        \justifies
        (L, s)M:= N\eval (L'', s''(l\mapsto V))\mathtt{skip}
\endprooftree
\]
\vspace{10pt}
\[
\prooftree
        M\eval \mathtt{mkvar}~V_1~V_2
        ~~~
        N\eval V
        ~~
        V_1(V) \eval \mathtt{skip}
        \justifies
        M:= N \eval \mathtt{skip}
\endprooftree
~~~~~~~~
\prooftree
        (L, s)M\eval (L', s') l
        ~~~~
        s'(l) = V
        \justifies
        (L, s) !M \eval (L', s') V
\endprooftree
~~~~~~~~
\prooftree
        M\eval \mathtt{mkvar}~V_1~V_2
        ~~~~
        V_2(\mathtt{skip})\eval V
        \justifies
        !M \eval V
\endprooftree
\]
}
\caption{Big-step operational semantics for references in $\Lang_2$.}
\label{opsem}
\end{figure*}

\paragraph{Syntactic extensions} In this core language, one can define all the constructs of a basic imperative programming language. For instance if $C_1$ has type
$\mathtt{unit}$, sequential composition $C_1; C_2$ is given by:
\[
(\lambda d:\mathtt{unit}.~C_2)~C_1
\]
This works only because the evaluation of $\Lang_2$ is call-by-value. Likewise, a variable declaration $\mathtt{new}~x:A~\mathtt{in}~N$
(where $M$ has type $A$) can be obtained by
\[
(\lambda x:\mathtt{var}[A].~N)~\mathtt{new}_A
\]
and its initialized variant $\mathtt{new}~x=M~\mathtt{in}~N$ as expected.
As usual with general references one can define a fixed point combinator $Y$ by
\[
\begin{array}{l}
\lambda f:(A\to B)\to (A\to B).\\
~\mathtt{new}~y:A\to B~\mathtt{in}\\
~~y:=\lambda a:A.~f~!y~a;\\
~~!y
\end{array}
\]
This can be easily applied to implement a $\mathtt{while}$ loop. We can also use it to build an inhabitant $\bot$ to any type $A$.

\paragraph{Bad variables and isomorphisms} The $\mathtt{mkvar}$ construct allows to combine arbitrary ``write" and ``read" methods, forming terms of type $\mathtt{var}[A]$ not behaving
as reference cells: those are called \emph{bad variables}. We include bad variable in $\Lang_2$ for two reasons. First, because games models that allow bad variables are notably
simpler than those which do not \cite{DBLP:conf/fossacs/MurawskiT09}, for which it is not clear whether our methods apply. Second, because we expect the problem of isomorphisms without
bad variables to be far more subtle than what we consider here, because of the observation by O'Hearn that without bad variables, not only $\mathtt{var}[X]$ is not functorial, but it does
not even preserve isomorphisms. However, note that $\texttt{var}$-free isomorphisms are the same with or without bad variables.

\paragraph{Isomorphisms of types} We are now ready to define the notion of isomorphism of types in $\Lang_2$.
\begin{definition}
If $A$ and $B$ are two types of $\Lang_2$, we say that $A$ and $B$ are \emph{isomorphic}, denoted by $A\iso_{\Lang_2} B$,  if and only if there are two terms $x:A \vdash M:B$ and $y:B \vdash N:A$ such that:
\begin{eqnarray*}
(x:A \vdash (\lambda y.N) M) &\obs& id_A\\
(y:B \vdash (\lambda x.M) N) &\obs& id_B
\end{eqnarray*}
where $id_A = x:A \vdash x:A$.
\end{definition}


\subsection{The games model}
\label{section_games}
We now describe the fully abstract games model of $\Lang_2$. Note that except a few details there is nothing new here, as this is precisely the model described in \cite{ahm}. We however include the definitions
(but no proofs) for the sake of self-completeness.
\paragraph{Arenas, plays}
Our games have two participants: Player (P) and Opponent (O). Valid plays between $O$ and $P$ are generated by directed graphs called \emph{arenas}, which are the abstract representation of types. An \textbf{arena} is
a tuple $A = \lpair{M_A, \lambda_A, I_A, \enb{A}}$ where
\begin{itemize}
\item $M_A$ is a set of \textbf{moves},
\item $\lambda_A: M_A \to \{O, P\}\times \{Q, A\}$ is a \textbf{labeling} function which indicates whether a move is by Opponent or Player, and whether it is a Question or Answer. We write
\begin{eqnarray*}
\{O, P\}\times \{Q, A\} &=& \{OQ, OA, PQ, PA\}\\
\lambda_A &=& \lpair{\lambda^{OP}_A, \lambda^{AQ}_A}
\end{eqnarray*}
The function $\overline{\lambda_A}$ denotes $\lambda_A$ with the $O/P$ part reversed. A move $a\in M_A$ is a $O$-move (resp. $P$-move) if $\lambda_A(a) = O$ (resp. $\lambda_A(a) = P$).
\item $I_A\subseteq {\lambda_A}^{-1}(\{OQ\})$ is a set of \textbf{initial moves}
\item $\enb{A}\subseteq M_A^2$ is a relation called \textbf{enabling}, which satisfies that if $a \enb{A} b$, then $\lambda_A^{OP}(a) \neq \lambda_A^{OP}(b)$, and if $\lambda_A^{QA}(b) = A$ then
$\lambda_A^{QA}(a) = Q$.
\end{itemize}
We require two additional conditions on arenas: they should be \textbf{complete} (for each question $m\in M_A$, there should be an answer $n\in M_A$ such that $m \enb{A} n$) and \textbf{finitely branching} (for all $a\in M_A$, the set $\{m\in M_A\mid a\vdash_A m\}$ is finite).
We consider the usual arrow construction $A\tto B$ on arenas, as well as products $\Pi_{i\in I} A_i$ and lifted sums $\Sigma_{i\in I} A_i$ of finite families
of arenas. Their definitions can be found, for example, in \cite{ahm}. It is obvious
that they preserve the fact of being complete and finitely branching. The $0$-ary product (the empty arena) is denoted by $\one$, and will be a terminal object in
our category.

If $A$ is an arena, a \textbf{justified sequence} over $A$ is a sequence of moves in $M_A$ together with \textbf{justification pointers}: for each non-initial move $b$, there is a pointer to
an earlier move $a$ such that $a\enb{A} b$. In this case, we say that $a$ \textbf{justifies} $b$. The transitive closure of the justification relation is called \textbf{hereditary justification}.

\paragraph{Notations} The relation $\sqsubseteq$ will denote the prefix ordering on justified sequences. By $s \sqsubseteq^P t$, we mean that $s$ is a $P$-ending prefix of $t$.
If $s$ is a sequence, then $|s|$ will denote its length. We also define the \textbf{prefix functions} $\ip$ and $\jp$ by $\ip(\epsilon) = \epsilon$ and $\ip(sa) = s$, and
$\jp(si) = \epsilon$ if $i$ is initial, $\jp(s_1 a s_2 b) = s_1 a$ if $b$ is justified by $a$.

A justified sequence $s$ over $A$ is a \textbf{legal play} if it is:
\begin{itemize}
\item \textbf{Alternating}: If $s'ab \sqsubseteq s$, then $\lambda_A^{OP}(a) \neq \lambda_A^{OP}(b)$.
\item \textbf{Well-bracketed}: a question $q$ is \textbf{answered} by a later answer $a$ if $q$ justifies $a$. A justified sequence $s$ is well-bracketed if each answer is justified by the last
unanswered question, that is, the \textbf{pending} question.
\end{itemize}
The set of all legal plays on $A$ is denoted by $\leg{A}$. We will also be interested in the set $\preleg{A}$ of well-bracketed but not necessarily alternating plays on $A$, called
\textbf{pre-legal plays}. 

\paragraph{Strategies, composition}

A \textbf{strategy} $\sigma$ on an arena $A$ (denoted $\sigma: A$) is a non-empty set of $P$-ending legal plays on $A$ satisfying \textbf{prefix-closure}, \emph{i.e.} that for all $sab \in \sigma$,
we have $s\in \sigma$ and \textbf{determinism}, \emph{i.e.} that if $sab, sac\in \sigma$, then $b=c$. As usual, strategies form a category which has
arenas as objects, and strategies $\sigma: A\tto B$ as morphisms from $A$ to $B$. If $\sigma : A\tto B$ and $\tau: B \tto C$ are strategies, their composition $\sigma; \tau: A\tto C$
is defined as usual by first defining the set of \textbf{interactions} $u\in I(A, B, C)$ of plays $u\in \leg{(A\tto B)\tto C}$ such that $u_{\restrict A, B}\in \leg{A\tto B}$,
$u_{\restrict B, C} \in \leg{B\tto C}$ and $u_{\restrict A, C}\in \leg{A\tto C}$ (where $s_{\restrict A, B}$ is the usual restriction operation essentially taking the subsequence of $s$ in $M_A$ and $M_B$,
along with the possible natural reassignment of justification pointers). The \textbf{parallel interaction} of $\sigma$ and $\tau$ is then the set 
$\sigma||\tau = \{u\in I(A, B, C) \mid u_{\restrict A, B} \in \sigma \wedge u_{\restrict B, C} \in \tau\}$, and the composition of $\sigma$ and $\tau$ is obtained by
the \textbf{hiding} operation, \emph{i.e.} $\sigma; \tau = \{u_{\restrict A, C}\mid u\in \sigma||\tau\}$.
It is known (e.g. \cite{McCusker1996}) that composition is associative. It admits \emph{copycat strategies} as identities:
$id_A = \{s\in \leg{A_1\tto A_2}\mid \forall s'\sqsubseteq^P s, s'_{\restrict A_1} = s'_{\restrict A_2}\}$.

If $s\in \leg{A}$, the \textbf{current thread} of $s$, denoted $\thread{s}$, is the subsequence of $s$ consisting of all moves hereditarily justified by the same initial move as the last
move of $s$. All strategies we are interested in will be \emph{single-threaded}, \emph{i.e.} they only depend on the current thread. Formally, $\sigma: A$ is \textbf{single-threaded} if
\begin{itemize}
\item For all $sab\in \sigma$, $b$ points in $\thread{sa}$,
\item For all $sab, t\in \sigma$ such that $ta\in \leg{A}$ and $\thread{sa} = \thread{ta}$, we have $tab\in \sigma$.
\end{itemize}
It is straightforward to prove that single-threaded strategies are stable under composition and that $id_A$ is single-threaded. Hence, there is a category $\Gam$ of arenas and single-threaded strategies.
The category $\Gam$ will be the base setting for our analysis. Given arenas $A$ and $B$, the arena $A\times B$ defines a cartesian product of $A$ and $B$ and the construction $A\tto B$ extends
to a right adjoint $A\times - \dashv A\tto -$, hence $\Gam$ is cartesian closed and is a model of simply typed $\lambda$-calculus. It also has \emph{weak coproducts}, given by the lifted sum \cite{ahm}.

\paragraph{Views, classes of strategies}

In this paper, we are mainly interested in the properties of single-threaded strategies. However, to give a complete account of the context it seems necessary to mention several classes of strategies
of interest in this setting. The most important one is certainly the class of \emph{innocent} strategies, both for historical reasons and because it is at the core of the frequent definability results -- and thus of
the full abstraction results -- in game semantics. Their definition requires the notion of $P$-view, defined as usual by induction on plays as follows.
\[
\begin{array}{rcll}
\pview{si} &=& i & \text{if $i\in I_A$}\\
\pview{sa} &=& \pview{s} &\text{if $\lambda_A^{OP}(a) = P$}\\
\pview{s_1 a s_2 b} &=& \pview{s_1} a b& \text{if $\lambda_A^{OP}(b) = O$ and $a$ justifies $b$}
\end{array}
\]
A strategy $\sigma: A$ is then said to be \textbf{visible} if it always points inside its $P$-view, that is, for all $sab\in \sigma$ the justifier of $b$ appears in $\pview{sa}$. The strategy $\sigma$ is 
\textbf{innocent} if it is visible, and if its behaviour only depends on the information contained in its $P$-view. More formally, whenever $sab, t\in \sigma$ such that $ta\in \leg{A}$ and $\pview{sa} = \pview{ta}$,
we must also have $tab\in \sigma$. Both visibility and innocence are stable under composition \cite{hyland-ong,abramsky-mccusker:active-algol}, thus let us denote by $\Vis$ the category of
arenas and visible single-threaded strategies and by
$\Inn$ the category of arenas and innocent strategies. Both categories inherit the cartesian closed structure of $\Gam$, but strategies in $\Inn$ are actually nothing but abstract representation of ($\eta$-long $\beta$-normal)
$\lambda$-terms and form a fully complete model of simply-typed $\lambda$-calculus. Strategies in $\Vis$ have more freedom, they correspond in fact to programs with first-order store \cite{abramsky-mccusker:active-algol}.

\subsection{Modeling $\Lang_2$ in $\Fam(\Gam)$}

\paragraph{Interpretation}
The three categories $\Gam$, $\Vis$ and $\Inn$ are categories of \emph{negative games} (in which Opponent always plays first), and these are known to model call-by-name computation
whereas $\Lang_2$ is call-by-value.
We could have modeled it using positive games, following the lines of \cite{DBLP:journals/tcs/HondaY99}. Instead, we follow \cite{ahm}
and model $\Lang_2$ in the free completion $\BFam(\Gam)$ of $\Gam$ with respect to coproducts. This will allow us to first characterize
isomorphisms in $\Gam$ (result which could be applied to a call-by-name language with state) then deduce from it the isomorphisms in $\BFam(\Gam)$.
In fact we will consider the completion $\Fam(\Gam)$ of $\Gam$ with respect to \emph{finite} coproducts, since $\Lang_2$ has only finite types.

The objects of $\Fam(\Gam)$ are finite families $\{A_i\mid i\in I\}$ of arenas. A map from $\{A_i\mid i \in I\}$ to $\{B_j\mid j\in J\}$ is the data of a function $f: I\to J$ together with a family of strategies
$\{\sigma_i:A_i\to B_{f(i)}\mid i\in I\}$. As shown in \cite{abramsky-mccusker:families}, the cartesian closed structure of $\Gam$ extends to $\Fam(\Gam)$. Moreover, the weak coproducts in $\Gam$ give rise to a \emph{strong monad} $T$ on $\Fam(\Gam)$.
Given families $A = \{A_i\mid i \in I\}$ and $B = \{B_j\mid j\in J\}$, we define
\begin{eqnarray*}
A\times B &=& \{A_i \times B_j\mid (i,j) \in I\times J\}\\
A\tto B &=& \{\Pi_{i\in I} (A_i\tto B_{f(i)})\mid f: I \to J\}\\
TA&=& \{\Sigma_{i\in I} A_i\}
\end{eqnarray*}
The singleton family $\{\one\}$ is the terminal object of $\Fam(\Gam)$. By abuse of notation, we will still denote it by $\one$.
The other components of the cartesian closed structure of $\Fam(\Gam)$ and of the strong monad structure of $T$ follow naturally from these definitions. We skip the details, as all of this is already covered in \cite{ahm}. It is
known that given a cartesian closed category with a strong monad, one can interpret call-by-value languages in the Kleisli category of the monad \cite{moggi}, and the interpretation of $\Lang_2$ in $\Fam(\Gam)$ follows these lines.

We interpret $\mathtt{unit}$ and $\mathtt{bool}$ as the families with respectively one and two elements whose components are all empty arenas.
Of course we also need to give an interpretation for $\var{A}$, along with morphisms for the read and write operations of the reference cell. Once again, we follow the lines of \cite{ahm} and consider the type $\var{A}$
as the product of its read and write methods, hence we set $\intr{\var{A}} = (\intr{A} \tto T\one) \times T\intr{A}$. The interpretation relies on the definition of a morphism $\one \to \intr{\var{A}}$,
that is, if $\intr{A} = \{A_i\mid i \in I\}$, a strategy $\mathtt{cell}: (\Pi_{i\in I} (A_i\tto \one_\bot)\times \Sigma_{i\in I} A_i)_{\bot}$, where $A_\bot = T\{A\}$ is the lift operation. Apart from the initial
protocol due to the lift, the strategy $\mathtt{cell}$ works by associating each read request with the latest write request and playing copycat between them. A more detailed description is given in \cite{ahm}, and an
algebraic definition is obtained in \cite{DBLP:journals/entcs/MelliesT09}. As proved in \cite{ahm}, this gives a sound interpretation of $\Lang_2$ in $\Fam(\Gam)_T$.

\paragraph{Complete plays and full abstraction}
A fully abstract model of $\Lang_2$ is obtained by quotienting $\Fam(\Gam)_T$ by the usual observational preorder. However, as is often the case with fully abstract game semantics of languages with 
store, it is \emph{effectively presentable}: the observational preorder can be characterized directly. Say a play $s\in \leg{A}$ is \textbf{complete} if
it has as many questions and answers, \emph{i.e.} all questions have been answered. If $\sigma$ is a strategy on an arena $A$, then let us denote by $\comp(\sigma)$ the set of complete plays
in $\sigma$. Take $\sigma, \tau: A\to B$ two morphisms in $\Fam(\Gam)_T$, with $A = \{A_i \mid i\in I\}$ and $B = \{B_j\mid j\in J\}$. Then, $\sigma$ and $\tau$ consist of families
$\{\sigma_i : A_i \to \Sigma_{j\in J} B_j\mid i\in I\}$ and $\{\tau_i : A_i \to \Sigma_{j\in J} B_j\mid i\in I\}$. We then say that $\sigma \sleq \tau$ if and only if
for all $i\in I$, $\comp(\sigma_i) \subseteq \comp(\tau_i)$.

Take now two terms $M$ and $N$ of type $A$, and suppose $\intr{A} = \{A_i \mid i\in I\}$. Then 
$\intr{M}$ and $\intr{N}$ are morphisms from $\one$ to $T \intr{A}$ in $\Fam(\Gam)$, \emph{i.e.} strategies on $\Sigma_{i\in I} A_i$. 
Given the full abstraction result of \cite{ahm}, it is then straightforward to prove the following equivalence:
\[
M \obleq N \Leftrightarrow \intr{M} \sleq \intr{N}
\]
This concrete representation of the observational preorder will be central to our characterization of isomorphisms in $\Lang_2$.

\section{Isomorphisms in $\Gam$}
\label{section_isomorphisms}
We are now going to extend Laurent's tools \cite{DBLP:journals/mscs/Laurent05} to characterize isomorphisms of types for $\Lang_2$. We will first recall Laurent's work in the visible and innocent
cases, then extend it to characterize isomorphisms in $\Gam$. From there, we will switch to call-by-value and study the isomorphisms in $\Fam(\Gam)$, and in its Kleisli category
over $T$. 

\subsection{Isomorphisms and zig-zag strategies}

We start by giving the definition of (a subtle adaptation of) Laurent's \emph{zig-zag} plays.

\begin{definition}
Let $s\in \leg{A\tto B}$ be a legal play. It is \textbf{zig-zag} if 
\begin{itemize}
\item[1.] Each $P$-move following an $O$-move in $A$ (resp. in $B$) is in $B$ (resp. in $A$),
\item[2.] A $P$-move in $A$ immediately follows an initial $O$-move in $B$ if and only if it is justified by it, 
\item[3.] The (not necessarily legal) sequences $s_{\restrict A}$ and $s_{\restrict B}$ have the same pointers.
\end{itemize}
If $s$ only satisfies the first two conditions, then it is \textbf{pre-zig-zag}.
\end{definition}

By extension, we will say that a strategy $\sigma$ is \textbf{pre-zig-zag} (resp. \textbf{zig-zag}) if all its plays are so. 
The core of Laurent's theorem is then that all isomorphisms in $\Vis$ are zig-zag strategies. His proof does rely on visibility, however
it only gets involved to prove that the condition $3$ of zig-zag plays
is satisfied. The first half of his argument does not use visibility and actually proves that all isomorphisms in $\Gam$ are pre-zig-zag.
Here, being mainly interested in $\Gam$, we make this explicit. We need first the following lemma.

\begin{lemma}[Dual pre-zig-zag play]
Let $s\in \leg{A\tto B}$ be a pre-zig-zag play, then there exists an unique pre-zig-zag $\overline{s}\in \leg{B\tto A}$ such that $\overline{s}_{\restrict A} = s_{\restrict A}$
and $\overline{s}_{\restrict B} = s_{\restrict B}$.
\end{lemma}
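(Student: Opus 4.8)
The plan is to construct $\overline{s}$ directly by induction on the length of $s$, reusing the same underlying sequence of moves but transposing the roles of $A$ and $B$, and then to verify that the result is a pre-zig-zag legal play on $B\tto A$. Concretely, given a pre-zig-zag $s\in\leg{A\tto B}$, I would define $\overline{s}$ to have exactly the moves of $s$ in the same order — a move that lives in $A$ stays in $A$ and a move in $B$ stays in $B$ — but with the $O/P$ polarity of every move reversed (so the play now starts with $O$ playing the initial move of $A$'s component of $B\tto A$, which is the same physical move as before). The justification pointers are inherited from $s$ wherever both a move and its justifier already lie in the same side, so that condition $3$'s requirement $\overline{s}_{\restrict A}=s_{\restrict A}$ and $\overline{s}_{\restrict B}=s_{\restrict B}$ holds by construction; the only pointers that genuinely need redefining are those crossing between $A$ and $B$, and here condition $2$ of pre-zig-zag (a $P$-move in $A$ immediately following an initial $O$-move in $B$ is justified by it) tells us the unique sensible choice.

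The key steps, in order, are: (i) check that $\overline{s}$ is a well-defined justified sequence over $B\tto A$ — every non-initial move has a justifier correctly enabling it under $\enb{B\tto A}$, which amounts to checking the enabling for the reassigned cross-pointers and noting that the $B\tto A$ enabling relation is the ``mirror'' of the $A\tto B$ one; (ii) check alternation, which is immediate since $s$ alternates and we have reversed all polarities uniformly; (iii) check well-bracketedness, which transfers from $s$ because answers, their justifying questions, and the pending-question structure are all about the order of question/answer moves, which is unchanged; (iv) check that $\overline{s}$ satisfies conditions $1$ and $2$ of pre-zig-zag for $B\tto A$ — condition $1$ is symmetric in $A$ and $B$ and so survives the swap, and condition $2$ is exactly what dictated the cross-pointer reassignment; (v) establish uniqueness, by observing that any pre-zig-zag $t\in\leg{B\tto A}$ with $t_{\restrict A}=s_{\restrict A}$ and $t_{\restrict B}=s_{\restrict B}$ must have the same sequence of moves as $s$ (the two restrictions determine the moves and their side, and condition $1$ forces how the two interleave once the first move is fixed), the same within-side pointers (forced by the restriction conditions), and the same cross-side pointers (forced by condition $2$).

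The main obstacle I expect is step (v), the interleaving/uniqueness argument: one has to argue that the two shuffles $s_{\restrict A}$ and $s_{\restrict B}$, together with the pre-zig-zag constraints, reconstruct the full interleaving uniquely. The point is that condition $1$ says the play is a strict alternation of ``bounces'' between the two components — after an $O$-move in $A$ the next $P$-move is in $B$, and vice versa — so once we know that the play lives in $B\tto A$ (fixing which component $O$ opens in), the pattern of which side each successive move belongs to is completely determined by the polarities, and hence by the two restrictions; there is no freedom left. A secondary subtlety is making sure the cross-pointer reassignment in step (i) is genuinely forced and consistent: condition $2$ is an ``if and only if'', so it pins down precisely when a $P$-move in the $A$-component points into $B$, and for the remaining cross-situations one uses that in $B\tto A$ the only enabling from $B$'s initial moves into $A$ is to $A$'s initial moves, matching the structure of $A\tto B$ read in reverse. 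Once these are in place, the lemma follows, and $\overline{s}$ is manifestly an involution: $\overline{\overline{s}} = s$.
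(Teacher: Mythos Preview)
Your overall outline --- construct $\overline{s}$ explicitly, verify it is a legal pre-zig-zag play, then argue uniqueness from the constraints --- is the right shape, but the construction you describe is wrong. You propose that $\overline{s}$ has ``exactly the moves of $s$ in the same order''. That cannot give a legal play on $B\tto A$. The first move of $s$ is an initial move of $B$ (that is where $O$ opens in $A\tto B$). In $B\tto A$ this same move is \emph{not} initial (the initial moves of $B\tto A$ are the initial moves of $A$), so it would need a justifier and there is none before it; moreover its polarity in $B\tto A$ is $P$, so your sequence would begin with a $P$-move. Your parenthetical ``so the play now starts with $O$ playing the initial move of $A$'s component of $B\tto A$, which is the same physical move as before'' is exactly where this goes off the rails: the first move of $s$ lives in $B$, not in $A$.

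The fix, and the paper's construction, is to swap each $O$/$P$ pair: set $\overline{\epsilon}=\epsilon$ and $\overline{sab}=\overline{s}\,ba$. Condition~1 of pre-zig-zag guarantees that in each such pair $ab$ the two moves lie on opposite sides, so the restrictions to $A$ and to $B$ are unchanged by the swap; and now the first move of $\overline{s}$ is the second move of $s$, which by condition~2 is an initial $A$-move --- exactly the opening $O$-move needed in $B\tto A$. Pointers are kept, except that the cross-pointer from an initial $A$-move to the initial $B$-move it followed is reversed (this is the unique reassignment your step~(i) was gesturing at). Once you adopt this pairwise swap, your verification steps~(ii)--(iv) go through essentially as you wrote them, and your uniqueness argument~(v) is correct in spirit: alternation plus condition~1 force the interleaving of $s_{\restrict A}$ and $s_{\restrict B}$ in any pre-zig-zag play on $B\tto A$, but the forced interleaving is the pair-swapped one, not $s$ itself.
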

\begin{proof}
We define $\overline{s}$ by induction on $s$; $\overline{\epsilon} = \epsilon$, and $\overline{sab} = \overline{s}ba$. We keep the same pointers, except for the case where a move
$a$ in $A$ was justified by an initial move $b$ in $B$. Then because of the pre-zig-zag condition on $s$, $a$ is necessarily an initial move in $A$ and 
is set as the new justifier of $b$ in $\overline{s}$. There is no other possible $\overline{s}$, since the restrictions on $A$ and $B$ are constrained by the hypotheses
and their interleaving is forced by the alternation and the pre-zig-zag conditions on $\overline{s}$.
\end{proof}

\begin{lemma}
If $\sigma: A\tto B$, $\tau: B\tto A$ form an isomorphism in $\Gam$, then they are pre-zig-zag and
for all $s$, $s\in \sigma \Leftrightarrow \overline{s}\in \tau$.
\end{lemma}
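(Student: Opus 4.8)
The plan is to show both the pre-zig-zag property and the pointwise correspondence $s \in \sigma \Leftrightarrow \overline{s} \in \tau$ simultaneously, by induction on the length of plays, exploiting that $\sigma;\tau = id_A$ and $\tau;\sigma = id_B$. The central tool is the analysis of interaction sequences. Given $s \in \sigma$, I would look at a play $t \in id_B = \tau;\sigma$ with the same projection to $B$ on one side; since $id_B$ is copycat, the two copies of $B$ in $B_1 \tto B_2$ carry identical $P$-ending restrictions, and unwinding the hiding shows the witnessing interaction $u \in \tau \| \sigma$ must, at the $B$-interface, alternate between $\tau$ and $\sigma$ in lockstep. Dually, using $\sigma;\tau = id_A$ pins down the behaviour at the $A$-interface. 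The key structural fact to extract from this is that each move $\sigma$ plays is in the opposite component ($A$ vs $B$) from the $O$-move it responds to: if $\sigma$ ever answered an $O$-move in $B$ by a $P$-move again in $B$ (or stayed in $A$), then in the composite interaction that $B$-move would have to be matched by $\tau$'s copy and we would lose the copycat lockstep, forcing $\sigma;\tau \neq id_A$ or $\tau;\sigma \neq id_B$ on some complete play. This gives condition 1 of pre-zig-zag.

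Next I would handle condition 2, the requirement about initial $O$-moves in $B$. An isomorphism must in particular be a total strategy on complete plays, and the initial move of $A \tto B$ is the unique initial $O$-move of $B$; the first $P$-move of $\sigma$ lies in $A$ by condition 1 and must be an initial move of $A$. The condition ``it is justified by the initial $O$-move of $B$ iff it is a $P$-move in $A$ immediately following it'' is then forced: since $\sigma;\tau = id_A$ must reproduce copycat on $A$, the hereditary justification structure transported through the interaction forces the pointer of this first $A$-move to go to the initial $B$-move; and conversely a later $P$-move in $A$ following an initial $O$-move in $B$ can only arise when a fresh thread is opened, again forcing the pointer by the copycat constraint on the $A$-side. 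I would phrase this carefully using the prefix functions $\ip, \jp$ and the single-threaded hypothesis, reducing everything to the current thread.

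Once pre-zig-zag is established for $\sigma$, the dual play $\overline{s}$ from the preceding lemma is well-defined, and I claim $s \in \sigma \Leftrightarrow \overline{s} \in \tau$. For the forward direction, I would build, from $s \in \sigma$, an interaction $u \in \sigma \| \tau$ over $(A \tto B)\tto A$ (or $(B\tto A)\tto B$) whose projections are $s$ on the $\sigma$-side; because $\sigma;\tau = id_A$ is copycat, the $\tau$-side restriction is then completely determined, and one checks by the uniqueness clause in the dual pre-zig-zag lemma that it is exactly $\overline{s}$. The determinism and prefix-closure of $\tau$, together with the induction hypothesis for $\ip(s)$, let this be carried out step by step: each new $O/P$ pair $ab$ appended to $s$ corresponds to a $ba$ appended to $\overline{s}$, and membership of $\overline{s}ba$ in $\tau$ is forced by the copycat equation. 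The converse direction is symmetric, swapping the roles of $\sigma$ and $\tau$ and of $\sigma;\tau=id_A$ with $\tau;\sigma=id_B$, noting $\overline{\overline{s}} = s$.

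I expect the main obstacle to be the bookkeeping in the interaction argument that derives condition 1: one must argue that a hypothetical ``non-alternating-across-components'' move in $\sigma$ genuinely propagates to a failure of one of the copycat identities, and this requires care about hiding, about which component the offending move lives in, and about the fact that $\Gam$ strategies need not be visible — so unlike Laurent's visible case, I cannot use $P$-views to localize the argument and must instead reason globally through the composite interaction and the completeness of plays. Making precise that ``$\sigma;\tau = id_A$ on \emph{all} plays, not just complete ones, in $\Gam$ after possibly passing to the effectively presentable quotient'' is the delicate point; I would either work with the genuine identity in $\Gam$ (strategies up to nothing, since $id_A$ is a bona fide strategy) so that the equation is on the nose, or restrict attention to complete plays and use that an isomorphism is determined by its complete plays. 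The rest is a routine, if lengthy, induction.
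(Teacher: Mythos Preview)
Your overall architecture---a simultaneous induction on play length, at each step zipping $s\in\sigma$ with $\overline{s}\in\tau$ into an interaction witnessing one of the identities and deriving a contradiction when a pre-zig-zag condition fails---is exactly the paper's approach, and your sketch for condition~1 matches the paper's argument.

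There is, however, a genuine gap in your treatment of condition~2. The biconditional has two directions. The direction ``if a $P$-move in $A$ immediately follows an initial $O$-move $b$ in $B$, then it is justified by $b$'' is indeed immediate from single-threadedness, and that is all your sketch establishes (both your ``first $A$-move'' case and your ``fresh thread'' case are instances of this direction). The converse---if a $P$-move $a$ in $A$ is justified by an initial $B$-move, then $a$ must \emph{immediately follow} that move---is not covered. Concretely: suppose $sba\in\sigma$ with $b$ in $B$ not initial, but $a$ justified by some earlier initial $B$-move. Nothing in your outline rules this out. The paper's argument here is the subtle step: one forms the interaction $u\in I(B_1,A,B_2)$ from $\overline{s}$ and $s$ (witnessing $\tau;\sigma=id_B$), extends it to $u'=u\,b_2\,a\,b_1$ where $b_1$ is $\tau$'s response to $a$, and observes that in $\overline{s}\,a\,b_1\in\tau$ the move $a$ is initial in $B\tto A$ while the copycat constraint on $u'_{\restrict B_1,B_2}$ forces $b_1$ to point back into $\overline{s}$, violating single-threadedness of $\tau$. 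The key manoeuvre is to pass to the \emph{dual} interaction and exploit single-threadedness of the \emph{inverse} strategy; your ``copycat constraint on the $A$-side'' does not reconstruct this.

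A minor point: your closing worry about quotients is misplaced. In $\Gam$ the equation $\sigma;\tau=id_A$ is a literal equality of strategies, so the zipping argument works on the nose; the observational quotient only enters later, when relating $\Gam$ to the language.
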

\begin{proof}
Consider an isomorphism $\sigma: A\tto B$, $\tau: B\tto A$ in $\Gam$. We will prove by induction on even $k\in \mathbb{N}$ that all plays
of $\sigma, \tau$ whose length is less than $k$ are pre-zig-zag, and that moreover 
$\{\overline{s}\mid s \in \sigma \wedge |s|\leq k\} = \{s\in \tau\mid |s|\leq k\}$.

If $k=0$, this is trivial.
Otherwise, suppose this is true up to $k\in \mathbb{N}$, and consider $sab\in \sigma$ of length $k+2$; let us first prove condition (1).
Without loss of generality, suppose $a\in M_A$. Since $s_{\restrict B} = \overline{s}_{\restrict B}$, by
a straightforward zipping argument we can build an interaction $u\in I(A_1, B, A_2)$
such that $u_{\restrict A_1, B} = s$ and $u_{\restrict B, A_2} = \overline{s}$, moreover since $\sigma, \tau$ form an isomorphism we
must have $u_{\restrict A_1, A_2} \in id_A$. Now, we necessarily have $b\in M_B$, otherwise $u$ could be extended to $uab\in \sigma||\tau$
with $uab_{\restrict A_1, A_2} = u_{\restrict A_1, A_2} ab$ which is not a play of the identity, contradiction. Hence $sab$ satisfies condition $1$ of
pre-zig-zag plays. 

To see why it satisfies condition $2$, take $sba\in \sigma$ with $b$ in $B$ and $a$ in $A$. If $b$ is initial in $B$, then $a$ necessarily points to it
since $\sigma$ is single-threaded. Reciprocally, suppose $a$ points to an initial move in $B$ earlier than $b$. Then we have $\overline{s}\in \tau$, and by
the same zipping argument as above we have an unique $u\in I(B_1, A, B_2)$ such that $u_{\restrict B_1, A} = \overline{s}$ and $u_{\restrict A, B_2} = s$. Since
$\sigma, \tau$ form an isomorphism we also have $u_{\restrict B_1, B_2} \in id_B$. Let us now extend $u$ to $u' = u b_2 a b_1$ in the unique way such that 
$u'_{\restrict A, B_2} = sba$ and $u'_{\restrict B_1, A} \in \tau$. Note that we are sure that $b_1$ is a move on $B_1$ since $\overline{s} a b_1$ is a play
of $\tau$ of length $k+2$ and we already know that these satisfy the condition $1$ of pre-zig-zag plays. But we also have $u'_{\restrict B_1, B_2} \in id_B$,
hence $b_2$ points in $\overline{s}$ as $b_1$ points in $s$. This means that we have $\overline{s} a b \in \tau$, such that $a$ is initial and $b$ points in
$\overline{s}$, impossible since $\tau$ is single-threaded. Hence $sba$ satisfies condition $2$ of pre-zig-zag plays.

We have proved that $sab$ is pre-zig-zag, so $\overline{sab}$ is defined. By induction hypothesis $\overline{s}\in \tau$ and the same reasoning as above
shows that it extends to $\overline{sab}\in \tau$. The argument is symmetric, 
hence $\{\overline{s}\mid s \in \sigma \wedge |s|\leq k+2\} = \{s\in \tau\mid |s|\leq k+2\}$.
\end{proof}

For the sake of completeness, let us include Laurent's argument which proves that isomorphisms in $\Vis$ are zig-zag.

\begin{lemma}
If $\sigma: A\tto B$, $\tau: B\tto A$ form an isomorphism in $\Vis$, then $\sigma$ and $\tau$ are zig-zag strategies.
\end{lemma}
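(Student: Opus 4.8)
The goal is to upgrade the previous lemma (isomorphisms in $\Gam$ are pre-zig-zag, with $\sigma$ and $\tau$ mirror images via $s\mapsto\overline s$) to the full zig-zag property when $\sigma,\tau$ live in $\Vis$, i.e.\ to establish condition $3$: for every play $s\in\sigma$, the (non-legal) sequences $s_{\restrict A}$ and $s_{\restrict B}$ carry the same justification pointers. Since the previous lemma already gives us $s_{\restrict A}$, $s_{\restrict B}$ and the mirror play $\overline s\in\tau$, the only thing left is a pointer-matching argument, and this is exactly the place where visibility is needed. I would proceed by induction on the length of $s$, adding at each step the pair of moves the strategies exchange, and checking that the newly-added pointer on the $A$-side and on the $B$-side land on the same move.

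\textbf{Key steps.} First I would set up the induction: by the pre-zig-zag lemma it suffices to consider $sab\in\sigma$ with, say, $a$ a $P$-move in $A$ following an $O$-move, hence (condition $1$) $b\in M_B$; the symmetric case with $a\in M_B$ is identical. The pointer of $a$ inside $s_{\restrict A}$ is already handled by the induction hypothesis applied to the shorter play, so the real content is: the justifier of $a$ in $A$ and the justifier of $b$ in $B$ occupy ``the same position'' in the two restrictions. Here I would invoke visibility of $\sigma$: the justifier of $a$ lies in the $P$-view $\pview{s}$. The crucial observation — Laurent's — is that for a pre-zig-zag play the $P$-view, read off on the $A$-component and on the $B$-component, is a single alternating zig-zag in which $A$-moves and $B$-moves strictly interleave, so a pointer into $\pview{s}_{\restrict A}$ determines a unique corresponding position in $\pview{s}_{\restrict B}$ and vice versa. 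Then I would use the zipping/interaction argument from the previous lemma: build $u\in I(A_1,B,A_2)$ with $u_{\restrict A_1,B}=s$, $u_{\restrict B,A_2}=\overline s$, so that $u_{\restrict A_1,A_2}\in id_A$; extending $u$ by the moves dictated by $\sigma$ and $\tau$ and using that the result must still restrict into the copycat on $A_1,A_2$ forces the justifier of $b$ on the $B$-side to mirror the justifier of $a$ on the $A$-side. Running this together with visibility of $\tau$ (for the symmetric half) closes the induction, and since every play of $\sigma$ is then zig-zag, so is $\sigma$; the mirror statement for $\tau$ follows from $s\in\sigma\Leftrightarrow\overline s\in\tau$.

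\textbf{Main obstacle.} The delicate point is the view-based pointer transfer: one must argue that visibility of $\sigma$ pins the justifier of $a$ to a move of $\pview{s}$ that, via the zig-zag shape of the view, \emph{already has a determined image} on the $B$-side, and that this image is forced by the copycat constraint to be exactly the justifier that $\tau$ assigns to $b$. Making precise the claim that in a (pre-)zig-zag play the $P$-view is itself zig-zag — so that restricting it to $A$ and to $B$ gives two sequences whose positions are in canonical bijection — is the technical heart, and it is where one genuinely cannot do without visibility, since without it the justifier of $a$ could point outside the view into a part of the play with no controlled counterpart in the other component. Everything else (the zipping construction, the case analysis on which component $a$ is in, the appeal to single-threadedness for initial moves) is a reprise of machinery already developed in the preceding two lemmas.
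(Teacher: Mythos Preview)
Your overall plan---induction on play length, with visibility of both $\sigma$ and $\tau$ doing the work---matches the paper. But the mechanism you describe has two concrete problems, and the detour through the zipping/copycat argument is not needed.

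First, visibility is applied to the wrong move. In $sab\in\sigma$ with $a\in M_A$, the move $a$ is an $O$-move of $A\tto B$ (polarities in $A$ are reversed); visibility of $\sigma$ constrains the justifier of the $P$-move $b$ to lie in $\pview{sa}$, and says nothing about where $a$ points. So ``the justifier of $a$ lies in $\pview{s}$'' is not what visibility gives you.

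Second, your ``crucial observation'' that the $P$-view of a (pre-)zig-zag play strictly interleaves $A$- and $B$-moves is false. Take the zig-zag play $s=b_0\,a_0\,a_1\,b_1\,b_2\,a_2$ in which each move points to the previous move of its own component (and $a_0$ to $b_0$): one computes $\pview{s}=s$, which exhibits two consecutive $A$-moves and two consecutive $B$-moves. So the pointer-transfer via an alternating shape of the view does not go through as stated, and the copycat constraint on $A_1,A_2$ that you invoke afterwards says nothing by itself about where a $B$-move points.

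The paper's argument is shorter and sidesteps both issues. Suppose $a$ points to $(s_{\restrict A})_i$. The computation of $\pview{sa}$ jumps from $a$ to its justifier $(s_{\restrict A})_i$ and thereafter only retains moves of $s$ occurring no later than that position; since in a pre-zig-zag play the $k$-th $A$-move and the $k$-th $B$-move always lie in the same $OP$-pair, no $(s_{\restrict B})_j$ with $j>i$ can appear in $\pview{sa}$. Visibility of $\sigma$ therefore forces $b$ to point to some $(s_{\restrict B})_j$ with $j\le i$. Applying the same reasoning to the dual play $\overline{sab}\in\tau$---where now $b$ is the $O$-move and $a$ the $P$-move---together with visibility of $\tau$ yields $i\le j$, hence $j=i$. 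No interaction with copycat is required; the dual play already delivers the second inequality.
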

\begin{proof}
We already know that $\sigma$ and $\tau$ are pre-zig-zag strategies. We show by induction on $n\in \mathbb{N}$ that for all $s\in \sigma$, if $|s|\leq n$
then $s_{\restrict A}$ and $s_{\restrict B}$ have the same pointers. Take now $s\in \sigma$, and $sab \in \sigma$, suppose w.l.o.g. that $a\in M_A$. Suppose
$a$ points to $(s_{\restrict A})_i$, then $b$ points to $(s_{\restrict B})_i$. Indeed, it cannot point to $(s_{\restrict B})_j$ with $j>i$ since that would
break visibility for $\sigma$. But if it points to $(s_{\restrict B})_j$ with $j<i$ we use the same reasoning on the dual pre-zig-zag play $\overline{sab}$
and get a contradiction with the fact that $\tau$ is visible.
\end{proof}

Let us denote by $\Gam_i$, $\Vis_i$ and $\Inn_i$ the groupoids of arenas and isomorphisms on the respective categories.
In the next sections, we use these facts to give more combinatorial representations of $\Gam_i$, $\Vis_i$ and $\Inn_i$.

\subsection{Notions of game morphisms}

Laurent's isomorphism theorem works by relating isomorphisms in $\Gam$ with isomorphisms in a simpler category which has arenas as objects and \emph{forest 
morphisms}\footnote{Note that in \cite{DBLP:journals/mscs/Laurent05} arenas are forests, which is not the case here.}, \emph{i.e.} maps on moves that preserve initiality
and enabling. Relaxing the visibility conditions requires us to also consider relaxed notions of game morphisms, that we present here.

\begin{definition}
Let $A$ be an arena. A \textbf{path} on $A$ is a play $s\in \leg{A}$ such that except for the initial move, every move in $s$ points to the previous move.
Formally, for all $s'ab\sqsubseteq s$, $a$ justifies $b$ in $s$. Let $\path{A}$ denote the set of paths on $A$. A \textbf{path morphism} from $A$ to $B$
is a function $\phi:\path{A} \to \path{B}$ such that $\ip\circ \phi = \phi\circ \ip$ and which preserves $Q/A$ labeling: for all $sa\in \path{A}$ with
$\phi(sa) = \phi(s) b$, we have $\lambda_A^{QA}(a) = \lambda_B^{QA}(b)$. There is a category $\Path$ of arenas and path morphisms.
\end{definition}

This category $\Path$ comes with its own notion of isomorphisms of arenas. Note that whenever $A$ is a forest, this is exactly Laurent's notion
of forest isomorphism. We now introduce two weaker notions of morphisms for arenas. In what follows, let us call a legal play on $A$ with only
one initial move a \textbf{thread} on $A$, and denote the set of threads on $A$ by $\threads{A}$. Likewise, 
let us call a pre-legal play with one initial move a pre-legal thread and let us denote these by $\prethreads{A}$.

\begin{definition}
Let $A$, $B$ be arenas, and let $\phi : \prethreads{A} \to \prethreads{B}$
We say that $\phi$ is a \textbf{sequential morphism}
from $A$ to $B$ if $\ip\circ \phi = \phi\circ \ip$, and if it preserves $Q/A$ labeling, \emph{i.e.} for all $\phi(sa) = \phi(s) b$ we have $\lambda_A^{QA}(a) = \lambda_B^{QA}(b)$.
We say that it is a \textbf{justified morphism} if, additionally, $\jp\circ \phi = \phi \circ \jp$.
There are two categories $\Seq$ of arenas and sequential morphisms and $\Jus$ of arenas and
justified morphisms.
\end{definition}

As above, we will denote by $\Seq_i$, $\Jus_i$ and $\Path_i$ the groupoids of invertible maps in $\Seq$, $\Jus$ and $\Path$. These groupoids
will soon appear to be identical to $\Gam_i$, $\Vis_i$ and $\Inn_i$. To prove this, we need the following lemma.

\begin{lemma}
Let $s\in \prethreads{A}$, and $\sigma:A\tto B$ an isomorphism in $\Gam$. There is then an unique play $s'\in \sigma$ such that $s'_{\restrict A} = s$.
\end{lemma}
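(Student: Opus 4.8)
The plan is to prove this by induction on the length of $s \in \prethreads{A}$, building the unique play $s' \in \sigma$ incrementally while maintaining the invariant that $s'$ is pre-zig-zag with $s'_{\restrict A} = s$ and that $s'$ ends with an $O$-move whenever $s$ ends with an $O$-move of $A$ (and symmetrically). The base case $s = \epsilon$ is immediate since $\epsilon \in \sigma$. For the inductive step, suppose $s a \in \prethreads{A}$ and we have the unique pre-zig-zag $s' \in \sigma$ with $s'_{\restrict A} = s$; we must extend $s'$ to capture the new move $a$.

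First I would treat the case where $a$ is an $O$-move of $A$. By the pre-zig-zag structure of $\sigma$ (Lemma on isomorphisms being pre-zig-zag), $s'$ must be $P$-ending (a $P$-move landing in $A$) or empty, so appending $a$ as the next $O$-move — viewed now on the left component $A$ of $A \tto B$ — yields a legal play $s' a$; since $\sigma$ is a strategy (prefix-closed, and defined after each $O$-move by determinism) and $s' \in \sigma$, if $s'a$ has a $P$-response in $\sigma$ it is unique, and condition (1) of pre-zig-zag forces that response $b$ to lie in $B$. So $s'ab \in \sigma$ (or $s'a$ is maximal in $\sigma$, but one argues using the isomorphism that a response always exists — this is where I'd use the interaction/zipping argument with $\tau$ and the identity, exactly as in the proof that isomorphisms are pre-zig-zag, to show $s'a$ cannot be $O$-maximal). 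Then $(s'ab)_{\restrict A}$ is $sa$ with the correct pointer: the justifier of $a$ in $s$ has a unique image in $s'_{\restrict A}$, and pre-zig-zag condition (2) handles the initial-move subtlety. The case where $a$ is a $P$-move of $A$ is dual but more delicate: here $a$ appears in $s'$ as a $P$-move, which by condition (1) must be immediately preceded by an $O$-move in $B$; so the previous move of $s'$ that we need is an $O$-move $b'$ in $B$, and we must argue that $s' b'$ (extending the already-constructed prefix) is the unique way to continue. This again uses the zipping argument: $s'b'a$ is forced because $\overline{s'}$ (the dual pre-zig-zag play, by the Dual pre-zig-zag Lemma) lies in $\tau$ and interacting against it through the identity pins down both the $O$-move $b'$ and that $a$ follows.

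The key structural point throughout is that at each step the move to be added on the $A$-side forces a matching move on the $B$-side by alternation plus pre-zig-zag condition (1), so the interleaving of $s'$ is completely determined by $s$; uniqueness then follows because at $O$-moves the strategy has no choice and at $P$-moves determinism of $\sigma$ applies. For existence of the $P$-responses one leans on the fact that $\sigma$ and $\tau$ compose to identities, so whenever the identity strategy on $A$ must respond (which happens exactly when $s$ can be extended), $\sigma$ must supply the move that makes this possible — formalized by the same zipping/interaction argument already used twice in the preceding lemmas.

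The main obstacle I anticipate is the bookkeeping of justification pointers, particularly around initial moves: when $a$ is an initial move of $A$, in the play $s' \in \leg{A \tto B}$ it is not initial but rather points to some initial move of $B$ (pre-zig-zag condition (2)), and one must check that this is consistent with single-threadedness of $\sigma$ and that the restriction $s'_{\restrict A}$ recovers $a$ with the intended (absence of) pointer. A second, related subtlety is verifying that the constructed $s'$ genuinely restricts to the given pre-legal — as opposed to legal — thread $s$: since $s$ need only be well-bracketed, not alternating, one must be careful that the alternating play $s' \in \sigma$ can still project onto a non-alternating sequence, which works precisely because moves of $s'$ in $A$ and in $B$ strictly alternate (condition (1)), so $s'_{\restrict A}$ can stall while $s'$ plays in $B$. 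I would handle both by carefully unwinding the definitions of restriction and pre-zig-zag, but expect no genuinely new idea beyond the interaction arguments already deployed.
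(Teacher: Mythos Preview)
Your approach is essentially the paper's---induction on $s$, handling the two cases according to the polarity of $a$, using totality and determinism for one case and the dual play $\overline{s'} \in \tau$ for the other---but you have the polarity convention for $A \tto B$ backwards, and as written both cases would fail. In the arena $A \tto B$, moves of $A$ have their $O/P$ labels \emph{reversed}: an $O$-move of $A$ is a $P$-move of $A \tto B$, and a $P$-move of $A$ is an $O$-move of $A \tto B$. (The paper's own proof says this explicitly: ``If $a$ is a $P$-move in $A$ (hence an $O$-move in $A\tto B$)''.) So when $a$ is an $O$-move of $A$, the sequence $s'a$ is $P$-ending in $A \tto B$, not $O$-ending, and there is no question of $\sigma$ ``responding'' to it; this is the hard case needing $\tau$. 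Conversely when $a$ is a $P$-move of $A$, $s'a$ is $O$-ending and the easy argument (totality plus determinism of $\sigma$) applies. Your two case treatments are correct in content but attached to the wrong labels.

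Once you swap the labels, your argument matches the paper's. The paper is slightly crisper in organization: it first records as a preliminary that both $\sigma$ and $\tau$ are \emph{total} (any $O$-extension of a play in the strategy has a $P$-reply), which dispatches the easy case in one line. For the hard case ($a$ an $O$-move of $A$, hence a $P$-move in $A \tto B$) it uses $\overline{s'} \in \tau$, totality of $\tau$ to find $b$ with $\overline{s'}ab \in \tau$, then dualizes to $s'ba \in \sigma$; uniqueness of $b$ comes from determinism of $\tau$, not of $\sigma$. Your concerns about pointer bookkeeping and non-alternating restrictions are legitimate but are absorbed by the pre-zig-zag lemmas already established, which is why the paper does not linger on them.
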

\begin{proof}
Remark first that if $\sigma: A\tto B$ and $\tau: B \tto A$ are inverses then they are both total, \emph{i.e.} for all $s\in \sigma$ and $sa\in \leg{A\tto B}$
there must be $b$ such that $sab\in \sigma$, assuming it is not the case easily leads to a contradiction. We now prove the lemma by induction on $s$. If $s = \epsilon$,
this is trivial. Otherwise, suppose $sa\in \prethreads{A}$ and we have by induction hypothesis $s'\in \sigma$ such that $s'_{\restrict A} = s$. If $a$ is a
$P$-move in $A$ (hence an $O$-move in $A\tto B$), there is an unique $b$ such that $s'ab\in \sigma$, and we do have $s'ab_{\restrict A} = sa$. If
$a$ is an $O$-move in $A$ (hence a $P$-move in $A\tto B$), then let $\tau: B\tto A$ be the inverse of $\sigma$, since $s'\in \sigma$ we have 
$\overline{s'}\in \tau$. Being part of an isomorphism $\tau$ is total, hence there is $b$
such that $\overline{s'}ab\in \tau$. We deduce from this that $s' b a\in \sigma$, and we have $s'ba_{\restrict A} = sa$ as needed. This choice is unique:
if there is another play $t\in \sigma$ such that $t_{\restrict A} = sa$, then $t = t'b'a$ (since $t$ is zig-zag). By induction
hypothesis we have $t' = s'$, thus $s'b'a\in \sigma$. From this we deduce that $\overline{s'}ab'\in \tau$, so $b=b'$ by determinism of $\tau$.
\end{proof}

\begin{proposition}
If $\C \iso \D$ means that two groupoids $\C$ and $\D$ are \emph{isomorphic}, then we have:
\begin{eqnarray*}
\Gam_i &\iso& \Seq_i\\
\Vis_i &\iso& \Jus_i
\end{eqnarray*}
\end{proposition}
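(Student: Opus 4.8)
The plan is to build an equivalence of groupoids in each case by sending an isomorphism $\sigma : A \tto B$ in $\Gam$ (resp. $\Vis$) to the map on pre-legal threads (resp. on paths, via justified morphisms) extracted from the zig-zag structure of $\sigma$, and conversely to reconstruct a zig-zag strategy from such a morphism. First I would define, for an isomorphism $\sigma : A \tto B$ in $\Gam$ with inverse $\tau$, a function $F(\sigma) : \prethreads{A} \to \prethreads{B}$: given $s \in \prethreads{A}$, the previous lemma furnishes a unique $s' \in \sigma$ with $s'_{\restrict A} = s$, and I set $F(\sigma)(s) = s'_{\restrict B}$. I would check that $s'_{\restrict B}$ is indeed a pre-legal thread (well-bracketed because $s'$ is a legal play on $A \tto B$ and the zig-zag condition transfers the bracketing; one initial move because $s$ has one and the zig-zag conditions 1–2 force the initial move of $B$ to be played in response to, and justified near, the initial move of $A$), that $\ip \circ F(\sigma) = F(\sigma) \circ \ip$ (immediate from the shape $\overline{sab} = \overline{s}ba$ and the uniqueness lemma), and that $Q/A$ labeling is preserved (because in a zig-zag play each $P$-move copies an $O$-move whose $Q/A$ label is the same — this is where well-bracketedness of $A \tto B$ together with alternation pins down the answer/question status). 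So $F(\sigma) \in \Seq$. For $\Vis$, the extra pointer-matching condition 3 in the definition of zig-zag gives exactly $\jp \circ F(\sigma) = F(\sigma) \circ \jp$, so $F(\sigma) \in \Jus$.

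Next I would construct the inverse assignment. Given $\phi \in \Seq$ from $A$ to $B$ that is invertible with inverse $\psi$, I define $\sigma = G(\phi)$ to be the set of $P$-ending prefixes of plays obtained by interleaving $s$ with $\phi(s)$ in zig-zag fashion: concretely, for $s \in \prethreads{A}$ of even length one forms the unique alternating, well-bracketed justified sequence on $A \tto B$ whose restriction to $A$ is $s$, whose restriction to $B$ is $\phi(s)$, and which satisfies the zig-zag conditions — existence and uniqueness of this interleaving is exactly the content of the dual pre-zig-zag lemma applied in the right direction, together with the observation that alternation plus conditions 1–2 leave no freedom in the order of moves. I would then verify that $\sigma$ so defined is a genuine single-threaded strategy (prefix-closure is built in; determinism follows from $\phi$ being a function and $\ip \circ \phi = \phi \circ \ip$; single-threadedness because everything is a thread by construction and the zig-zag shape forbids pointing outside the current thread) and that $G(\psi) : B \tto A$ is its two-sided inverse — composing them, the interaction sequence has its $A$-to-$A$ restriction forced to be copycat precisely because $\psi \circ \phi = \mathrm{id}$ on pre-legal threads. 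For the $\Vis$ case one additionally checks that $\phi \in \Jus$ forces condition 3, hence the resulting $\sigma$ is visible (if a $P$-move pointed outside the $P$-view, conditions 1–3 would propagate the bad pointer to the other component and contradict $\jp$-preservation of $\psi$).

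Finally I would check functoriality and that $F$ and $G$ are mutually inverse on the nose, so that the groupoids are literally isomorphic and not merely equivalent: both $F$ and $G$ are the identity on objects (arenas), and $G(F(\sigma)) = \sigma$ because a zig-zag strategy is completely determined by the function $s \mapsto s_{\restrict B}$ on pre-legal $A$-threads (Lemma on unique $s'$ extending $s$ says the play $s'$ is recovered, hence all of $\sigma$ by prefix-closure and single-threadedness), while $F(G(\phi)) = \phi$ is immediate from the construction. Composition is respected because the interaction of two zig-zag strategies is again zig-zag and its outer restriction realizes the composite morphism — this is the standard fact that zig-zag strategies compose by "following the wire through the middle arena", matching composition of the associated functions on threads. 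The main obstacle, I expect, is the bookkeeping around justification pointers in the two interleaving lemmas: making precise that "the same pointers on $s_{\restrict A}$ and $s_{\restrict B}$" together with the re-pointing of the initial $B$-move under the initial $A$-move (as in the dual pre-zig-zag lemma) yields a well-defined play on $A \tto B$, and conversely that a $\Jus$-morphism's commutation with $\jp$ is exactly strong enough to force zig-zag condition 3 — everything else is a routine induction on play length riding on the three lemmas already proved.
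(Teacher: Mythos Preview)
Your approach is essentially the same as the paper's: both directions of the correspondence are set up exactly as you describe, with $F(\sigma)(s) = s'_{\restrict B}$ via the uniqueness lemma and $G(\phi)$ the zig-zag strategy reconstructed from $\phi$, and the $\Vis_i \iso \Jus_i$ case obtained by restriction. The paper packages $G(\phi)$ more compactly as $\{s \in \leg{A\tto B} \mid \forall s' \sqsubseteq^P s,\ \phi(s'_{\restrict A}) = s'_{\restrict B}\}$ (extending $\phi$ canonically from threads to multi-threaded plays), which sidesteps the bookkeeping you anticipate around interleaving and also automatically gives a strategy on \emph{all} legal plays rather than just threads; your construction as stated only produces threads, so you would still need to close under thread-interleaving to obtain a genuine single-threaded strategy on $A\tto B$.
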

\begin{proof}
Let us first define a functor $F: \Gam_i \to \Seq_i$. It is defined as the identity on arenas. Let $\sigma : A\tto B$ be an isomorphism, and
let $s\in \prethreads{A}$ then we define $\phi_\sigma(s) = s'_{\restrict B}$, where $s'$ is the unique play on $A\tto B$ which existence is ensured by the lemma above.
The function $\phi_\sigma$ commutes with $\ip$ since $\sigma$ is a pre-zig-zag strategy. To any question it cannot associate an answer, as that would immediately break well-bracketing on $\sigma$.
But to any answer it cannot associate a question, as that would immediately break well-bracketing on $\sigma^{-1}$.
Then we define $F(\sigma) = \phi_\sigma$. It is obvious that $F$ preserves identities and composition\footnote{In fact, this construction can be seen as a particular case
of Hyland and Schalk's faithful functor from games to relations \cite{hyland-schalk}, where the relation happens to be functional.}.

Reciprocally, suppose $\phi: A\to B$ is a sequential isomorphism. We mimic the usual definition of the identity by setting
$G(\phi) = \{s\in \leg{A\tto B}\mid \forall s'\sqsubseteq^P s,~ \phi(s'_{\restrict A}) = s'_{\restrict B}\}$ (We apply $\phi$ on
plays whereas it is normally only defined on \emph{threads}, however it can be canonically extended to plays, so this is not ambiguous).
It is obvious that this construction is functorial, and that it is inverse to $F$.

We have now an isomorphism $\Gam_i \iso \Seq_i$ which restricts naturally to $\Vis_i$ and $\Jus_i$. Indeed if $\sigma: A\tto B$ is a visible isomorphism,
it is a zig-zag strategy therefore $s\in \prethreads{A}$ and $\phi_\sigma(s)$ have the same pointers, which means that $\jp\circ \phi_\sigma = \phi_\sigma \circ \jp$.
Reciprocally if $\phi_\sigma$ is a justified morphism, all $s\in \sigma$ must be such that $s_{\restrict A}$ and $s_{\restrict B}$ have the same pointers,
therefore $\sigma$, being pre-zig-zag, always points in its $P$-view.
\end{proof}

\subsection{Innocent and visible case}

In this section, we use the framework described above to recall Laurent's results. We have proved above that isomorphisms in $\Vis$ correspond
to isomorphisms in $\Jus$, which we are now going to compare with isomorphisms in $\Path$. 

\begin{lemma}
There is a full functor $H: \Vis_i \to \Path_i$.
\end{lemma}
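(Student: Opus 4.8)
The plan is to construct the functor $H$ on objects as the identity on arenas, and on morphisms by sending a justified isomorphism $\phi_\sigma : A \to B$ (equivalently, a visible isomorphism $\sigma : A \tto B$ via the previous proposition) to its restriction to paths. The key observation is that a justified morphism $\phi : \prethreads{A} \to \prethreads{B}$ commuting with both $\ip$ and $\jp$ must send paths to paths: a path is exactly a thread in which $\jp = \ip$ on every prefix, and since $\phi$ commutes with both, this property is preserved. So $\phi$ restricts to a function $\path{A} \to \path{B}$, which is visibly a path morphism (it still commutes with $\ip$ and preserves $Q/A$-labeling, these being inherited from the sequential-morphism structure). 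Functoriality is immediate since restriction of functions to a subdomain preserves identity and composition, and the restriction of an isomorphism with an inverse justified morphism is an isomorphism with the restricted inverse. So $H$ is a well-defined functor $\Vis_i \to \Path_i$; the content of the lemma is fullness.

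For fullness, I would start from an arbitrary path isomorphism $\psi : A \to B$ in $\Path_i$, with inverse $\psi^{-1} : B \to A$, and aim to extend it canonically to a justified isomorphism $\phi : \prethreads{A} \to \prethreads{B}$ whose image under $H$ is $\psi$. The idea is that a justified morphism commuting with $\jp$ is entirely determined by its action on paths: every pre-legal thread $s$ is built from moves, and each move $a$ in $s$ sits at the end of a unique ``branch'', namely the subsequence of moves hereditarily justifying $a$ together with $a$ itself, which is a path. One sends $a$ to the last move of $\psi$ applied to that branch; the pointer structure on the image is then forced by requiring $\jp$-commutation (the image of $a$ points to the image of $\jp(sa)$'s last move). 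I would then check that this assignment is well-defined — in particular that it produces a genuine pre-legal thread on $B$ (well-bracketing must be verified, using that $\psi$ preserves $Q/A$-labeling and that well-bracketing of a pre-legal thread is determined by its branch structure) — and that it commutes with $\ip$. The inverse is built symmetrically from $\psi^{-1}$, and the two are mutually inverse because $\psi$ and $\psi^{-1}$ are, branch by branch.

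The main obstacle I anticipate is the well-definedness of this extension: showing that the branch-wise image actually assembles into a legitimate pre-legal thread. The subtle point is well-bracketing. In a pre-legal thread the answers must match the pending questions, and while the $Q/A$-labeling is preserved branch-wise by $\psi$, one must argue that the \emph{relative positions} of questions and answers — which determine which question is pending — are preserved. This should follow because $\phi$ commutes with $\ip$ (so the linear order of moves is preserved) and because an answer in a pre-legal thread is justified by a question that appears on its branch; combining order-preservation with branch-preservation pins down the bracketing. A secondary subtlety is checking that $\phi$ so defined is $\ip$-commuting and not merely defined move-by-move: one needs that $\phi(sa)$ really is $\phi(s)$ followed by one move, which again comes from the branch of $a$ in $sa$ being the branch of $\jp(sa)$'s last move extended by $a$, so its $\psi$-image extends the corresponding image by one move. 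Once these combinatorial checks are in place, the rest is bookkeeping, and we conclude $H\phi = \psi$, establishing fullness.
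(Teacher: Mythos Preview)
Your proposal is correct and follows essentially the same approach as the paper: define $H$ by restricting the justified isomorphism $F(\sigma)$ to paths (using that $\jp$-commutation forces paths to map to paths), and establish fullness by extending a path isomorphism $\psi$ to pre-legal threads move-by-move, sending each move to the last move of $\psi$ applied to its justification-branch, with pointers inherited positionally. You are more explicit than the paper about the well-bracketing verification and about building the inverse from $\psi^{-1}$; the paper dispatches these with ``It is clear that this defines as needed a justified morphism,'' but your sketch of why well-bracketing is preserved (order, $Q/A$-labels, and pointer positions are all preserved) is exactly the content hidden behind that phrase.
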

\begin{proof}
We have built in the above section a full and faithful functor (actually an isomorphism) $F: \Vis_i \to \Jus_i$. From a visible isomorphism $\sigma : A\tto B$
we set $H(\sigma) = F(\sigma) \restrict \path{A}$, where $f \restrict E'$ restricts a function $f: E\to F$ to a subset $E'\subseteq E$ of its domain.
The image of a path by $F(\sigma)$ is always a path since it is a justified morphism, hence $H(\sigma) : \path{A} \to \path{B}$.

To see why $H$ is full, suppose we have a path morphism $\phi: \path{A} \to \path{B}$. Then $\phi$ admits a canonical extension $\phi^*: \prethreads{A} \to \prethreads{B}$. To
define $\phi^*(s)$ we reason by induction on $s$, and set $\phi^*(\epsilon) = \epsilon$ and $\phi^*(sa) = \phi^*(s) a'$, where $a'$ is the last move of $\phi(p_a)$, $p_a$
being the path of $a$ in $s$. The move $a'$ keeps the same pointer as $a$. It is clear that this defines as needed a justified morphism $\phi^*$ such that $H(\phi^*) = \phi$.
\end{proof}

This ensures that arenas $A$ and $B$ are isomorphic in $\Vis$ if and only if they are isomorphic in $\Path$, \emph{i.e.} they are geometrically the same. 
Let us mention that as Laurent proved, this correspondence is one-to-one in the innocent case: one can prove that there is only one innocent zig-zag strategy corresponding
to a particular path isomorphism, hence $H$ restricts to an isomorphism of groupoids $H': \Inn_i \to \Path_i$.

\paragraph{Faithfulness of $H$}
Note however that $H$ itself is \emph{not} faithful, because we can exploit non-innocence to build non-uniform isomorphisms, \emph{i.e.} isomorphisms which change
their underlying path isomorphism as the interaction progresses. For an example, consider the arena $A = \raisebox{20pt}{
\xymatrix@R=2pt{
&&q\\
q_1\ar@{-}@/^/[urr]&q_2\ar@{-}@/^/[ur]&a\ar@{-}[u]\\
a\ar@{-}[u]&a\ar@{-}[u]
}}
$
which is the interpretation of $(\mathtt{bool}\to \mathtt{unit}) \to \mathtt{unit}$ in call-by-value and of $\mathtt{unit}\times \mathtt{unit} \to \mathtt{unit}$ in
call-by-name. Consider now the strategy $i:A\tto A$ which behaves as follows. It starts by playing as the identity on $A$. The first time Opponent
plays $q_1$ or $q_2$ on the left hand side, it simply copies it. Starting from the second time Opponent plays $q_1$ or $q_2$ though, it swaps them. An example
play of $i$ is given in Figure \ref{involution}. Although it is not the identity, $i$ is its own inverse. Its image by $H$ only takes into account the first behaviour
or $i$, thus is the same as for $id_A$: the identity path morphism on $A$. From this strategy we can extract the following term $f:B \vdash M:B$ of $\Lang_2$, where
$B = (\mathtt{bool}\to \mathtt{unit}) \to \mathtt{unit}$.
\[
\begin{array}{l}
\mathtt{new}~r:=\mathtt{true}~\mathtt{in}\\
\lambda g. f (\lambda b. \mathtt{if}~!r~\mathtt{then}~r:=\mathtt{false};g~b~\mathtt{else}~g~(\mathtt{not}~b))
\end{array}
\]

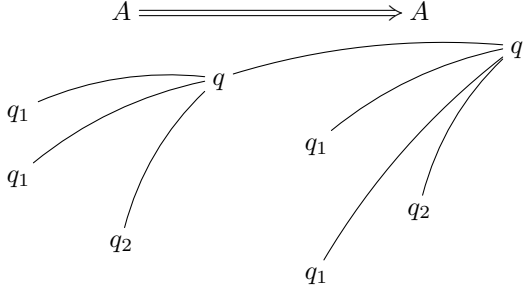
\begin{figure}
\[
\xymatrix@R=0pt{
			&A\ar@{=>}[rrr]		&			&			&A				&\\
			&			&			&			&				&q\\
			&			&q\ar@{-}@/^/[urrr]	&			&				&\\
q_1\ar@{-}@/^/[urr]	&			&			&			&				&\\
			&			&			&q_1\ar@{-}@/^/[uuurr]	&				&\\
q_1\ar@{-}@/^/[uuurr]	&			&			&			&				&\\
			&			&			&			&q_2\ar@{-}@/^/[uuuuur]		&\\
			&q_2\ar@{-}@/^/[uuuuur]	&			&			&				&\\
			&			&			&q_1\ar@{-}@/^/[uuuuuuurr]&				&
}
\]
\caption{A play of the non-trivial involution $i$ on $A$}
\label{involution}
\end{figure}
Although $M$ is not the identity it is an involution on $B$, \emph{i.e.} we have $(\lambda f. M) (M x) \obs_{\Lang_2} x$. Such non-trivial involutions
cannot be defined using only purely functional behaviour.

We give in Figure \ref{groupoids} a summary of all the groupoids of isomorphisms encountered for the moment, along with their relations. Following it,
the question of finding the isomorphisms in $\Gam$ boils down to the definition of an arrow from $\Seq_i$ to $\Path_i$ in this diagram, which is what we
will attempt in the next two subsections.

\begin{figure}
\[
\xymatrix@C=30pt@R=30pt{
\Inn_i  \ar[rr]^{\text{iso}}
        \ar[d]^{\subseteq}      &&\Path_i\\
\Vis_i  \ar[r]^{\text{iso}}
        \ar[d]^{\subseteq}      &\Jus_i \ar[ur]^{\text{full}}_{\text{(not faithful)}}\ar[dr]^{\subseteq}\\
\Gam_i  \ar[rr]^{\text{iso}}&&\Seq_i    \ar@{.>}[uu]_{?}
}
\]
\caption{Relations between all groupoids of isomorphisms}
\label{groupoids}
\end{figure}
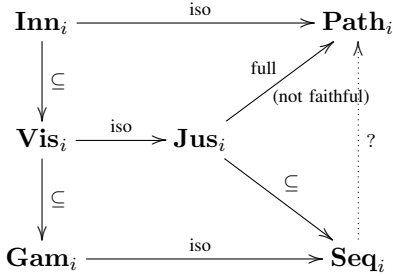

\subsection{Non-visible isomorphisms by counting}

We have seen above that we can build a full functor $\Vis_i \to \Path_i$, which allows to characterize isomorphic arenas in $\Vis$. However,
this construction relies heavily on visibility. We now investigate how to get rid of it and prove that two arenas $A$ and $B$ are isomorphic in $\Gam$ if and only if they are isomorphic in $\Path$.
In this subsection, we will describe for pedagogical reasons an intuitive approach to the proof, which relies on counting. However this approach
suffers from some defects, hence the full proof (described in the next subsection) will follow slightly different lines.

If $a\in M_A$, let us call its \textbf{arity}
the quantity $ar(a) = |\{m\in M_A\mid a \enb{A} m\}|$. On pre-threads $s\in \prethreads{A}$ we define:
\[
Q(s) = \sum_{i=1}^{|s|}{ar(s_i)}
\]
If $s\in \prethreads{A}$, $Q(s)$ is also the number of ways $s$ can be extended to some $sa$ (let us recall here that as a member of $\prethreads{A}$, $s$ need not be alternating): the choice of a justifier $s_i$
plus a move enabled by $s_i$. These definitions allow to express the following observation. If $\sigma: A\tto B$ is an isomorphism (thus a pre-zig-zag strategy) and $s\in \sigma$, then
$Q(s_{\restrict A}) = Q(s_{\restrict B})$, because $\sigma$ being an isomorphism, it must associate each possible extension of $s_{\restrict A}$ to an unique extension of $s_{\restrict B}$.
But this also means that if $sab\in \sigma$ we have $Q(s_{\restrict A}) + ar(a) = Q(s_{\restrict A} a) = Q(s_{\restrict B} b) = Q(s_{\restrict B}) + ar(b)$, hence $ar(a) = ar(b)$. Thus to
each move $a$, $\sigma$ must associate a move with the same arity. This is a step in the right direction, but we would like a deeper connection between $a$ and $b$.

If $a\in M_A$, we will use the notation $J_a = \{m\in M_A \mid a\vdash_A m\}$.
Let us define by induction on $k$ the notion of a $k$-isomorphism between $a\in M_A$ and $b\in M_B$. 
For any $a\in M_A$ and $b\in M_B$ there is automatically a $0$-isomorphism $i_{a, b}$. A $(k+1)$-isomorphism from $a$ to $b$
is the data of an isomorphism $f : J_a \to J_b$ along with, for all $m\in J_a$, a $k$-isomorphism $f_m: m \to f(m)$.
We use the notation $m \iso_k n$ to denote the fact that there is a $k$-isomorphism from $m$ to $n$. In other words, 
we have $m\iso_k n$ if the tree of paths of length at most $k$ starting form $m$ is tree-isomorphic to the tree of paths of length at most $k$ starting from $n$. 
If $k_1 \leq k_2$, $f_1$ is a $k_1$-isomorphism and $f_2$ is a $k_2$-isomorphism, we say that $f_1$ is a prefix of $f_2$ if they agree up to depth $k_1$.
Note that in particular we have $m \iso_1 n$ if and only if $ar(m) = ar(n)$, so $m \iso_k n$ is indeed a generalization of $ar(m) = ar(n)$. By induction on $k$, one can then prove that $\sigma$ must
always associate to each move $m$ a move $n$ such that $m \iso_k n$ : to prove it for $k+1$, just apply the counting argument above on $\iso_k$-equivalence classes. From all these $k$-isomorphisms,
one can then deduce the existence of a path isomorphism between $A$ and $B$. 

This counting argument has several unsatisfying aspects, which are caused by the implicit use of the following lemma.

\begin{lemma}[Slicing of isomorphisms]
Suppose $E'\subseteq E$ and $F'\subseteq F$ are finite sets, and that $f:E\to F$ and $g: E'\to F'$ are isomorphisms. Then there
is an isomorphism $f\setminus g: E\setminus E' \to F\setminus F'$.
\label{slicing_isos}
\end{lemma}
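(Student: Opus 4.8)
The plan is to observe that, for \emph{finite} sets, ``isomorphic'' means simply ``in bijection'', which is detected by cardinality alone, so the entire statement collapses to a one-line counting identity.

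First I would record the two cardinality equalities forced by the hypotheses: since $f : E \to F$ is a bijection between finite sets, $|E| = |F|$; and since $g : E' \to F'$ is a bijection, $|E'| = |F'|$. Then, because $E' \subseteq E$ and $F' \subseteq F$, the complements satisfy $|E \setminus E'| = |E| - |E'|$ and $|F \setminus F'| = |F| - |F'|$. Combining these gives $|E \setminus E'| = |F \setminus F'|$. Since any two finite sets of the same cardinality admit a bijection, I would then pick one such bijection and name it $f \setminus g : E \setminus E' \to F \setminus F'$, which is exactly what is required.

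The point worth stressing — and, as the surrounding text anticipates, the root of the ``unsatisfying aspects'' of this counting approach — is that there is \emph{no} canonical such bijection. In particular one cannot simply restrict $f$: the restriction of $f$ to $E \setminus E'$ is a bijection onto $F \setminus f(E')$, and $f(E')$ need not equal $F'$. So the only genuine mathematical content is the cardinality computation, and the main ``obstacle'' here is conceptual rather than technical: the lemma holds but delivers an arbitrary, non-natural bijection, which is precisely why the development that follows this remark abandons the naive counting argument in favour of a more structured construction.
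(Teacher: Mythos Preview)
Your proof is correct and is exactly the argument the paper gives: the paper's entire proof of this lemma is the sentence ``The obvious proof of this lemma is by cardinality reasons,'' and you have spelled out precisely that counting argument, together with the observation that the resulting bijection is non-canonical (depending implicitly on a choice of ordering), which is also what the paper remarks immediately afterward.
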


The obvious proof of this lemma is by cardinality reasons.
However this proof is, computationally speaking, ``almost non-effective", in the sense that the isomorphism it produces implicitly depends on the choice of a total
ordering for $E$ and $F$. A consequence of that is that from any isomorphism in $\Gam$ we will extract an isomorphism in $\Path$, but we cannot hope
its choice to be canonical, for any reasonable meaning of ``canonical". Even worse, the witness isomorphisms given by this proof for $\iso_k$ and $\iso_{k+1}$ need not agree together.
This implies that for infinitely deep arenas, one requires König's lemma to actually build a path isomorphism from a game isomorphism. This means that we cannot deduce from the proof above
an algorithm to extract path isomorphisms.

\subsection{Extraction of a path isomorphism}

To obtain a more computationally meaningful extraction of a path iso from a game iso, we must replace the proof of Lemma \ref{slicing_isos} by something
else than counting. As formalized in the following proof, the idea is to remark that given the data of Lemma \ref{slicing_isos}, starting from $x\in E\setminus E'$, the sequence
\begin{eqnarray*}
x_0 &=& f(x)\\
x_{n+1} &=& f\circ g^{-1}(x_n)
\end{eqnarray*}
must eventually reach $F\setminus F'$, as illustrated in Figure \ref{fig_slicing}, yielding a bijection between $E\setminus E'$ and $F\setminus F'$.

\begin{figure}
\begin{center}
\includegraphics[scale=0.5]{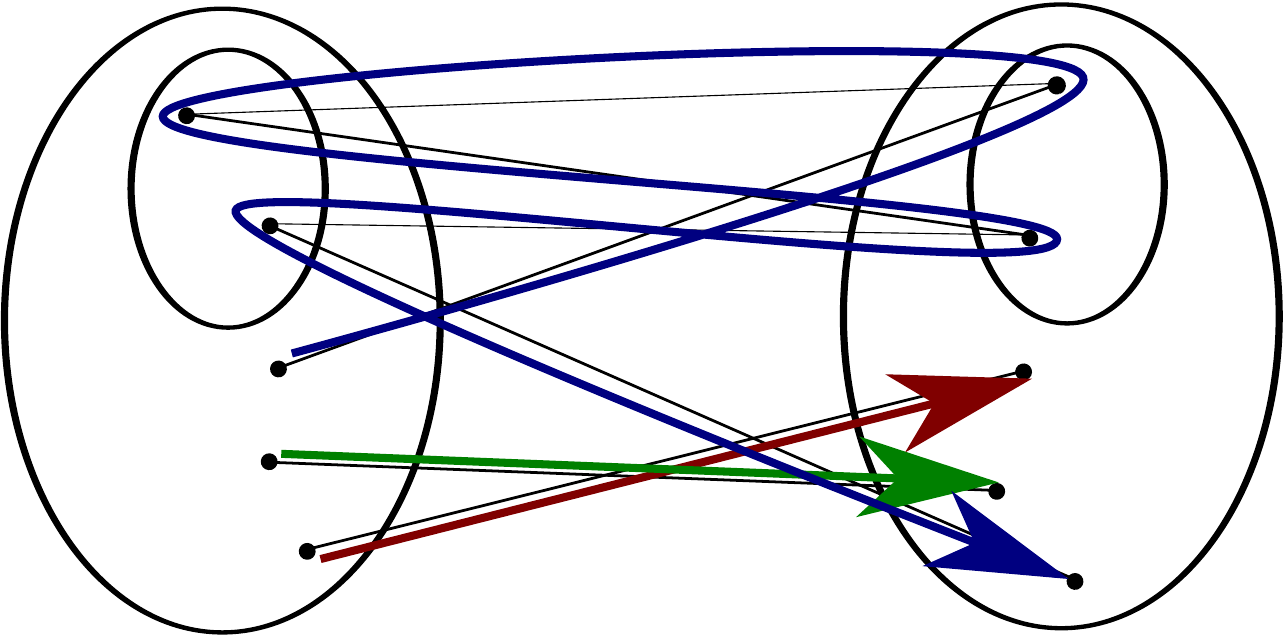}
\end{center}
\caption{Slicing of isomorphisms.}
\label{fig_slicing}
\end{figure}

\begin{proposition}
If $\phi : A\to B$ is a sequential play isomorphism, then for all $sa \in \prethreads{A}$ with $\phi(sa) = \phi(s) b$, there is a family
$(h_{s, sa}^k)_{k\in \mathbb{N}}$ such that for all $k$, $h_{s, sa}^k$ is a $k$-isomorphism from $a$ to $b$. 
This family is \emph{coherent}, in the following sense: if $k_1 \leq k_2$, $h_{s,sa}^{k_1}$ is a prefix of $h_{s, sa}^{k_2}$.
\end{proposition}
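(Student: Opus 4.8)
The plan is to proceed by induction on $k$, constructing the family $(h^k_{s,sa})$ uniformly over all arenas, all sequential isomorphisms $\phi$ and all admissible contexts $(s,sa)$, and to extract coherence from the same induction. The base case $k=0$ is forced: $h^0_{s,sa}=i_{a,b}$. For the inductive step, recall that a $(k{+}1)$-isomorphism from $a$ to $b$ consists of a bijection $g_{s,sa}:J_a\to J_b$ together with, for each $m\in J_a$, a $k$-isomorphism from $m$ to $g_{s,sa}(m)$. The organising idea is that $g_{s,sa}$ will depend on $\phi$ only — not on $k$ — so that all the $k$-indexed content lives in the recursively obtained inner pieces.

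To build $g_{s,sa}$, I work at the level of one-step extensions. Since $\phi$ commutes with $\ip$ and preserves $Q/A$ labelling, it induces for any $t\in\prethreads{A}$ a bijection $\phi_t$ from the one-step extensions of $t$ onto those of $\phi(t)$, preserving the $Q/A$ label of the new move. Split $J_a=J_a^Q\sqcup J_a^A$. For answers: if $m\in J_a^A$ then $a$ — hence $b$, by $Q/A$-preservation on $sa$ — is a question, and it is the pending question of $sa$ (resp.\ of $\phi(sa)$); so $sam\in\prethreads{A}$ and $\phi(sam)=\phi(sa)m_0\in\prethreads{B}$, and well-bracketing forces $m_0$ to point to $b$, i.e.\ $m_0\in J_b^A$. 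This yields directly a bijection $J_a^A\to J_b^A$. For questions: the question-extensions of $sa$ pointing strictly inside $s$ form a subset $E'$ of the (finite, the arena being finitely branching) set $E$ of all question-extensions of $sa$, with $E\setminus E'\cong J_a^Q$, and symmetrically $F'\subseteq F$ for $\phi(sa)$ with $F\setminus F'\cong J_b^Q$; applying the Slicing Lemma \ref{slicing_isos} to $\phi_{sa}:E\to F$ and $\phi_s:E'\to F'$ produces the required bijection $J_a^Q\to J_b^Q$. Crucially, the witness $f\setminus g$ is the \emph{effective} one obtained by iterating $f\circ g^{-1}$, hence canonical and independent of $k$.

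For the inner pieces, fix $m\in J_a$ and unwind $g_{s,sa}(m)$. If $m$ is an answer, $g_{s,sa}(m)=m_0$ with $\phi(sam)=\phi(sa)m_0$ and $sam\in\prethreads{A}$, so the induction hypothesis applied to the context $(sa,sam)$ supplies the $k$-isomorphism $h^k_{sa,sam}:m\to m_0$. If $m$ is a question, the slicing orbit yields a finite chain $m=p_0,p_1,\dots,p_\ell=g_{s,sa}(m)$ alternating between moves of $A$ and of $B$, each step $p_i\rightsquigarrow p_{i+1}$ being witnessed by an application of $\phi$ (at the context $(sa,\,sa\,p_i)$ in $A$) or of $\phi^{-1}$ (at the context $(\phi(s),\,\phi(s)\,p_i)$ in $B$); here one uses that all the $p_i$ are questions — so that each of these pre-threads is pre-legal, questions being appendable without any well-bracketing constraint. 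The induction hypothesis, applied to each link for $\phi$ or for $\phi^{-1}$, delivers a $k$-isomorphism; composing them — composition and inversion of $k$-isomorphisms are again $k$-isomorphisms — gives the $k$-isomorphism $m\to g_{s,sa}(m)$. Coherence then follows: $g_{s,sa}$ and the orbits are fixed once $\phi$ is, so $h^{k}_{s,sa}$ and $h^{k+1}_{s,sa}$ agree at top level, and on each inner piece they are (compositions and inverses of) instances $h^{k-1}_{\bullet}$ versus $h^{k}_{\bullet}$ at the \emph{same} contexts, which are prefixes of one another by the induction hypothesis; since composition and inversion preserve the prefix relation, $h^k_{s,sa}\sqsubseteq h^{k+1}_{s,sa}$.

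The main obstacle is the bookkeeping that guarantees every auxiliary sequence occurring in the construction — the extensions $sam$, $\phi(s)\,p_i$, $sa\,p_i$, and so on — is genuinely pre-legal, so that $\phi$ and the induction hypothesis actually apply to it; this is exactly what dictates the $Q/A$ split and the observation that the slicing orbits never leave the questions. A secondary subtlety is arranging the induction so that coherence is automatic: it is essential to extract $g_{s,sa}$ and the orbits from $\phi$ alone via the \emph{effective} form of slicing rather than a cardinality argument, so that the only $k$-dependent data is inherited verbatim from the recursive calls.
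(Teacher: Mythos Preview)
Your proposal is correct and follows the paper's approach essentially verbatim: the paper also extracts the $k$-independent bijection $g_{s,sa}$ via effective slicing (phrased as path-chasing in a bipartite graph $G_{s,sa}$ built from the one-step-extension bijections $f_{sa}$ and $f_s^{-1}$), labels the edges of the orbit with the inductively obtained $k$-isomorphisms, composes along each path, and derives coherence from the fact that $G_{s,sa}$ is independent of $k$. Your explicit $Q/A$ split is a refinement the paper leaves implicit in its decomposition $E_{sa}=E_s+J_a$; it handles the well-bracketing bookkeeping more carefully and, as a bonus, makes transparent the paper's closing remark that only question-branching need be finite.
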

\begin{proof}
We will use the following notations. If $s\in \prethreads{A}$, $E_s$ will be the set of atomic extensions of $s$, that is of plays $sa\in \prethreads{A}$, and $F_s$ will be the set of
atomic extensions of $\phi(s)$. For all plays $sa\in \prethreads{A}$, although strictly speaking $E_s$ is \emph{not} a subset of $E_{sa}$, 
we have the following decomposition:
\[
E_{sa} = E_s + J_a
\]
Indeed, a move extending $sa$ can either point to some $s_i$ or to $a$.
Note also that for any $s$, $\phi: sa \mapsto \phi(s) b$ induces an isomorphism $f_s: a \mapsto b$ from $E_s$ to $F_s$.

For all $s\in \prethreads{A}$ and $sa\in E_s$, we follow the reasoning illustrated in Figure \ref{fig_slicing} and consider a bipartite directed graph $G_{s, sa}$ defined as follows: its set of vertices is $V = E_{sa} + F_{sa}$ and its set of edges is
$E = \{(x, f_{sa}(x)) \mid x\in E_{sa}\} + \{(y, f_s^{-1}(y))\mid y \in F_s\}$. This graph is ``deterministic", in the sense that the outwards degree of each vertex is at most one, moreover
the only vertices whose outwards degree is $0$ are those of $J_b$ (where $b = f_s(a)$, so $F_{sa} = F_s + J_b$). Moreover $G_{s, sa}$ must be acyclic, since $f_s$ and $f_{sa}$ are isomorphisms.
Thus from any vertex in $J_a$, there is an unique path in $G$ leading to a vertex in $J_b$; this induces an isomorphism $g_{s, sa} : J_a \to J_b$. For each pair $(m, g_{s, sa}(m))$ we
also keep track of the corresponding path $p_{s, sa}^m = (m, f_{sa}(m), f_s^{-1}(f_{sa}(m)), \dots, g_{s, sa}(m))$.

It is now time to build the $k$-isomorphisms, by induction on $k$. For $k=0$ this is obvious. For fixed $k+1\geq 1$, by induction hypothesis there is for each $sa\in \prethreads{A}$ with
$\phi(sa) = \phi(s) b$ a $k$-isomorphism $h_{s, sa}^k$ from $a$ to $b$. In particular, for fixed $sa\in \prethreads{A}$, consider the graph $G_{s, sa}$. 
Each of its edges of the form $(x, f_{sa}(x))$ are now labeled by the $k$-isomorphism $h_{sa, x}^k$ and all its edges of the form $(y, f_s^{-1}(y))$ are labeled by $(h_{s, f_s^{-1}(y)}^k)^{-1}$.
For each pair $(m, g_{s, sa}(m))$ we can now compose the labels along the path $p_{s, sa}^m$ and get a $k$-isomorphism $i_m : m \to g_{s, sa}(m)$. We then define 
$h_{s, sa}^{k+1} = (g_{s, sa}, (i_m)_{m\in J_a})$ which is as needed a $(k+1)$-isomorphism from $a$ to $b$.

Note finally that if $k_1 \leq k_2$, $h_{s, sa}^{k_1}$ is a prefix of $h_{s, sa}^{k_2}$. This is proved by simultaneous induction on $k_1$ and $k_2$. If $k_1 = 0$ this is obvious. Otherwise,
it relies on the fact that the graph $G_{s, sa}$ does not depend on $k$. Hence $h_{s, sa}^{k_1 + 1} = ( g_{s, sa}, (i_m)_{m\in J_a})$ and $h_{s, sa}^{k_2 + 1} = ( g_{s, sa}, (j_m)_{m\in J_a})$,
and each $i_m$ has be obtained from $k_1$-isomorphisms in the same way as $j_m$ has been obtained from $k_2$-isomorphisms, so it immediately boils down to the induction hypothesis.
\end{proof}

\begin{theorem}
Two finitely branching arenas $A$ and $B$ are $\Gam$-isomorphic if and only if they are $\Path$-isomorphic.
\label{main}
\end{theorem}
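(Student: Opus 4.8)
The "if" direction is essentially trivial: a $\Path$-isomorphism $\phi:\path{A}\to\path{B}$ extends canonically to $\prethreads{A}\to\prethreads{B}$ (as in the proof that $H$ is full), giving a $\Jus_i$-isomorphism, hence via the isomorphism $\Vis_i\iso\Jus_i$ a $\Vis$-isomorphism, and $\Vis\subseteq\Gam$. So the real content is the "only if" direction, and the strategy is: from a $\Gam$-isomorphism $\sigma:A\tto B$ extract a path isomorphism $A\to B$. By the Proposition $\Gam_i\iso\Seq_i$ we may work with the induced sequential play isomorphism $\phi:A\to B$, so the task reduces to extracting a $\Path$-isomorphism from $\phi$.

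First I would use the preceding Proposition to obtain, for every $sa\in\prethreads{A}$ with $\phi(sa)=\phi(s)b$, a coherent family of $k$-isomorphisms $(h_{s,sa}^k)_k$ from $a$ to $b$. Passing to the limit along the coherence relation (the $h^{k_1}$ being a prefix of the $h^{k_2}$), I get a genuine $\infty$-isomorphism $h_{s,sa}$ between the tree of paths starting at $a$ and the tree of paths starting at $b$ — here finite branching is what makes each $J_a\to J_b$ step an honest finite bijection and makes the inverse limit well-defined without König's lemma. Applying this with $s=\epsilon$: the initial moves of $A$ are exactly $E_\epsilon$, mapped bijectively to the initial moves of $B$ by $f_\epsilon$, and each such pair carries a tree-isomorphism $h_{\epsilon,a}$ identifying the whole subtree of paths below $a$ with that below $f_\epsilon(a)$. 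Gluing these over all initial moves yields a forest/tree isomorphism on moves, i.e. a bijection $\psi:M_A\to M_B$ preserving initiality, enabling, and $Q/A$ labeling. That $\psi$ induces a $\Path$-isomorphism is then routine: it transports paths to paths and commutes with $\ip$ by construction.

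The step I expect to be the main obstacle is verifying that the locally-defined tree-isomorphisms actually glue into a single well-defined bijection on moves: a move $m\in M_A$ typically occurs in many different paths $sa$, and the candidate image of $m$ is read off from the $h$-data attached to each occurrence; one must check these agree. The clean way is not to define $\psi$ move-by-move but to build it inductively together with the whole family, exploiting that the graphs $G_{s,sa}$ — hence the bijections $g_{s,sa}:J_a\to J_b$ — depend only on the underlying sequential iso $\phi$ and not on $k$, and that $\phi$ already induces the coherent global choice. Concretely, I would fix once and for all the image of each initial move via $f_\epsilon$, then define the image of a move enabled by an already-treated move $a$ using $g_{s,sa}$ for any witnessing path, and prove well-definedness by induction on the depth, using determinism of $\sigma$ (equivalently, functionality of $\phi$) to show independence of the chosen witnessing path. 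Once $\psi$ is in hand, checking it is a $\Path$-iso and that its extension to plays recovers (up to the established equivalences) the class of $\sigma$ is bookkeeping.
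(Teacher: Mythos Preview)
Your first two-thirds is essentially the paper's proof: pass from $\sigma$ to the sequential isomorphism $\phi$, invoke the preceding Proposition to obtain the coherent family $(h^k_{s,sa})_k$, take the limit, and read off the path isomorphism from the data $(f_\epsilon,(h_{\epsilon,i})_{i\in I_A})$. One small correction: it is the \emph{coherence} of the family (each $h^{k_1}$ a prefix of $h^{k_2}$) that makes the limit exist without any appeal to K\"onig; finite branching has already been consumed inside the Proposition, where it ensures the graphs $G_{s,sa}$ are finite so that paths in them terminate.

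Your final paragraph, however, manufactures an obstacle that is not there and whose proposed resolution would fail. You attempt to descend from the tree-isomorphisms on paths to a bijection $\psi: M_A \to M_B$ on \emph{moves}, and then worry about consistency across different occurrences of a move. But a $\Path$-isomorphism is by definition a bijection $\path{A}\to\path{B}$ commuting with $\ip$ --- not a bijection on $M_A$. Arenas here are directed graphs, not forests (cf.\ the footnote to the definition of $\Path$), so a move may lie at the end of many distinct paths, and the bijections $g_{s,sa}: J_a \to J_b$ genuinely depend on the full play $s$, not only on its last move; your claimed ``independence of the chosen witnessing path'' is false in general and cannot be rescued by determinism of $\sigma$. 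The good news is that you never needed it: the limit $h_{\epsilon,i}$ already \emph{is} the required bijection between paths starting at $i$ and paths starting at $f_\epsilon(i)$, and assembling these over the bijection $f_\epsilon: I_A \to I_B$ on initial moves is a disjoint union of functions, with nothing to check. The paper's proof stops exactly there.
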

\begin{proof}
Consider an isomorphism $\sigma : A\tto B$ in $\Gam$. Restricted on plays with only two moves, it gives an isomorphism $f : I_A \to I_B$. By the previous proposition, there is for each $i\in I_A$
and for each $k\in \mathbb{N}$ a $k$-isomorphism $h^k_{\epsilon, i}: i \to f(i)$. Additionally, all these $k$-isomorphisms are compatible with each other, so they converge to an $\omega$-isomorphism
$h_{\epsilon, i}: i \to f(i)$. The iso $f$ together with $h_{\epsilon, i}$ for all $i$ define a path isomorphism from $A$ to $B$.
\end{proof}

\paragraph{Canonicity} For each pair of arenas $A, B$, we have built a function $K_{A, B} : \Gam_i(A, B) \to \Path_i(A, B)$. The natural question is then whether,
like in the other cases, this function defines a full functor. Unfortunately the answer is no, in fact $K_{A, B}$ is not even functorial. Indeed, the construction is based
on the more explicit proof of Lemma \ref{slicing_isos} illustrated in Figure \ref{fig_slicing}, which is not functorial; one can find
sets $E'\subseteq E$, $F'\subseteq F$ and $G' \subseteq G$ and isomorphisms $\xymatrix{E\ar[r]^{f} &F\ar[r]^{g}& G}$ and $\xymatrix{E'\ar[r]^{f'} &F'\ar[r]^{g'}& G'}$
such that $(f\setminus f'); (g\setminus g') \neq (f; g)\setminus (f'; g')$. From this it is not hard to find a counter-example to the functoriality
of $K_{A, B}$. It is a bit lengthy to describe it properly though, so we do not include it.

Although not being a functor, $K$ does satisfy some canonicity property: its result is invariant under renaming of moves in $A$ and $B$. In other terms,
$K_{A, B}: \Gam_i(A, B) \to \Path_i(A, B)$ is natural in $A$ and $B$, if both $\Gam_i(-, -)$ and $\Path_i(-,-)$ are seen as bifunctors
from $\Path_i^{op}\times \Path_i$ to $\Set$ (using implicitly the faithful functor from $\Path_i$ to $\Gam_i$ of Figure \ref{groupoids}).

\subsection{Application to $\Lang_2$}

Our isomorphism theorem most naturally applies to $\Gam$ (so to call-by-name languages), but $\Lang_2$ is modeled in $\Fam(\Gam)$, and more precisely in the Kleisli category of the strong monad $T$,
so we have to check how our result extends to this. Let us first relate isomorphisms in $\Fam(\Gam)_T$ and isomorphisms in $\Gam$ using the following lemma.

\begin{lemma}
Let $A = (A_i)_{i\in I}$ and $B = (B_j)_{j\in J}$, then isomorphisms between $A$ and $B$ in $\Fam(\Gam)_T$ are in one-to-one correspondence with pairs $(f, (\sigma_i)_{i\in I})$ where
$f: I \to J$ is a bijection and each $\sigma_i: A_i \tto B_{f(i)}$ is an iso in $\Gam$.
\label{analysis_isos}
\end{lemma}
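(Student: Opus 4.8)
The plan is to analyze the Kleisli category $\Fam(\Gam)_T$ concretely. Recall that a map $A \to B$ in $\Fam(\Gam)_T$ is a map $A \to TB$ in $\Fam(\Gam)$, and by the definitions $TB = \{\Sigma_{j\in J} B_j\}$ is a singleton family. So a morphism $\sigma : A \to B$ in $\Fam(\Gam)_T$ amounts to a function $I \to \{*\}$ (which carries no information) together with a family of strategies $\{\sigma_i : A_i \tto \Sigma_{j\in J} B_j \mid i \in I\}$, exactly as used in the discussion of complete plays and full abstraction. The first step is therefore to spell out composition in the Kleisli category in these terms, and to recall what the identity is: $id_A$ in $\Fam(\Gam)_T$ corresponds to the family of injections $A_i \tto \Sigma_{i'\in I} A_{i'}$ playing copycat into the $i$-th summand.

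Next I would take an iso $\sigma : A \to B$ with inverse $\tau : B \to A$ in $\Fam(\Gam)_T$. The key observation is that the lifted sum $\Sigma_{j\in J} B_j$ has its initial moves partitioned according to which component $B_j$ they come from (there is one ``first'' question introduced by the lift, and then the choice of summand is the first Opponent-or-Player response, depending on conventions; in any case the initial structure records a choice of index). Composing $\sigma$ with $\tau$ and requiring the result to be the copycat $id_A$ forces, threadwise, that the strategy $\sigma_i$ can only ever "select" one particular summand of $\Sigma_{j\in J} B_j$ — indeed if $\sigma_i$ could reach two distinct summands $B_j, B_{j'}$ then composing with $\tau$ could not reproduce the deterministic copycat behaviour on $A_i$. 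So each $\sigma_i$ factors through the injection of a single $B_{f(i)}$, i.e. $\sigma_i$ is (the injection composed with) a strategy $A_i \tto B_{f(i)}$ in $\Gam$ — modulo the trivial initial lift protocol, which is pure copycat on both sides and contributes nothing. Symmetrically each $\tau_j$ gives a strategy $B_j \tto A_{g(j)}$ in $\Gam$.

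Then I would check that $f : I \to J$ and $g : J \to I$ are mutually inverse bijections: from $\tau \circ \sigma = id_A$ one reads off $g(f(i)) = i$ for all $i$ (tracking which summand index is ultimately selected), and from $\sigma \circ \tau = id_B$ one gets $f(g(j)) = j$; and that for each $i$ the composite $\sigma_i ; \tau_{f(i)} : A_i \tto A_i$ equals $id_{A_i}$ in $\Gam$, with the symmetric statement, so that $(\sigma_i, \tau_{f(i)})$ is an iso in $\Gam$. Conversely, given a bijection $f$ and isos $\sigma_i : A_i \tto B_{f(i)}$ in $\Gam$, one assembles the Kleisli morphism whose $i$-th component is $\sigma_i$ followed by the injection into the $f(i)$-th summand, and checks it is an iso with the evident inverse built from $f^{-1}$ and the $\sigma_i^{-1}$. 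The bookkeeping that these two constructions are mutually inverse is routine.

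The main obstacle is the first half of step two: rigorously arguing that an iso component $\sigma_i$ cannot "spread" across several summands of $\Sigma_{j} B_j$. This is where one has to be a little careful with the lift/coproduct protocol of $T$ — one wants to say that the index-selecting move of $\sigma_i$ is forced to be a fixed function of (nothing, i.e. of the initial move of $A_i$ only) because otherwise $\tau$ would have to "undo" a non-deterministic choice, contradicting determinism of $id_{A_i}$ under composition. Making this precise amounts to a small zig-zag/interaction argument in the spirit of the pre-zig-zag lemmas already proved, restricted to the initial segment of plays; once that is in hand, everything else is unwinding definitions. I expect the paper either to invoke exactly such an interaction argument or to cite the analogous analysis of coproducts in $\Fam(\Gam)$ from \cite{abramsky-mccusker:families}.
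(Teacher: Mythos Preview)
Your overall shape matches the paper: unpack a Kleisli morphism as a family $\{\sigma_i : A_i \tto \Sigma_j B_j\}$, argue each $\sigma_i$ must commit to a single summand $B_{f(i)}$, deduce $f$ is a bijection and the components are isos in $\Gam$, then check the converse. So the decomposition and the endgame are the same.

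Where you diverge is precisely the step you flag as the main obstacle, and your anticipated argument is not quite the one the paper uses. You phrase the worry as $\sigma_i$ possibly ``reaching two distinct summands'' and appeal to determinism of the composite; but $\sigma_i$ is already deterministic, so the real danger is not non-determinism but that $\sigma_i$ might interrogate $A_i$ \emph{before} answering the lift question, and only later (depending on Opponent's replies in $A_i$) pick a summand. The paper's argument for ruling this out is not a pre-zig-zag style interaction argument: it observes that the Kleisli lifting $\tau^* : T(B) \to T(A)$ is \emph{strict}, i.e.\ its first move is to ask the lift question on the left. Since $\sigma_i;\tau^*$ must equal the $i$-th component of the Kleisli identity, which \emph{immediately} answers the initial lift question, and since $\tau^*$ starts by forwarding that question to $\sigma_i$, the strategy $\sigma_i$ is forced to answer immediately---hence to select $f(i)$ before anything else happens. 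This strictness-of-lifting observation is short and avoids the delicate interaction analysis you were bracing for; once you have it, the rest is exactly the bookkeeping you describe.
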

\begin{proof}
Imagine $\sigma: A \to T(B)$ and $\tau:B \to T(A)$ are morphisms in $\Fam(\Gam)$ that form an isomorphism in $\Fam(\Gam)_T$. Here $A$ and $B$
are families of arenas, so $A = (A_i)_{i\in I}$, and $\sigma = (\sigma_i)_{i\in I}$ with $\sigma_i: A_i \tto T(B)$. Similarly, we have $B = (B_j)_{j\in J}$ and $\tau_j : B_j \tto T(A)$.
Then, first note that $\sigma_i$ and $\tau_j$ necessarily first give an answer to the initial Opponent move in $T$, \emph{i.e.} the initial question of the lifted sum in $T(B)$ and $T(A)$. Indeed take $i\in I$, and consider $\sigma_i; \tau^*: A_i \to T(A)$, where $\tau^*: T(B) \to T(A)$ is the \emph{lifting} of $\tau: B\to T(A)$.
This morphism must be a component of the identity on $A$ in $\Fam(\Gam)_T$ since $\sigma,\tau$ form an iso. In particular, it does directly answer the initial move in $T$. However, by definition
of $\tau^*: T(B) \to T(A)$ it is \emph{strict}, \emph{i.e.} it directly interrogates the left occurrence of $T$, so $\sigma_i$ must necessarily first answer the initial move in $T$ otherwise we would
immediately get a contradiction.

This means that each $\sigma_i$ must first choose a component $j\in J$ (thus inducing a function $f: I \to J$), then play as $\sigma'_i : A_i \tto B_j$. The same analysis on $\tau$ provides a 
function $g: J \to I$ and strategies $\tau'_j: B_j \tto A_{g(j)}$, and it is then obvious that since $\sigma, \tau$ form an isomorphism $g$ must be inverse of $f$ and each $\tau'_j$ inverse of $\sigma'_{g(j)}$.
\end{proof}

\paragraph{Syntactic characterization} Let us prove now that the equational theory $\mathcal{E}$ given in Figure \ref{equational_theory} characterizes the types $A$ and $B$ such that
$\intr{A}$ and $\intr{B}$ are isomorphic in $\Fam(\Gam)_T$.

\begin{lemma}[Type normal form]
Any type $A$ has an unique representative (up to $\iso_\mathcal{E}$) generated by the non-terminal $S$ in:
\begin{eqnarray*}
S&::=& \mathtt{bool}^n \times T\\
T&::=& \mathtt{unit} \mid \Pi_{i\in I} U\\
U&::=& T \to S
\end{eqnarray*}
\end{lemma}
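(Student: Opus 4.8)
The plan is to prove the Type Normal Form lemma in two parts: first \emph{existence} of a representative generated by $S$, by induction on the structure of $A$; then \emph{uniqueness} up to $\iso_\mathcal{E}$, by analysing what the equational theory $\mathcal{E}$ can and cannot do, which in the final count will be cross-checked against the geometric (arena/path) characterisation of isomorphisms established in Theorem~\ref{main} and Lemma~\ref{analysis_isos}.

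For existence, I would induct on $A$. The base cases $\mathtt{unit}$ and $\mathtt{bool}$ are immediate ($\mathtt{unit} = \mathtt{bool}^0 \times \mathtt{unit}$ and $\mathtt{bool} = \mathtt{bool}^1\times\mathtt{unit}$). For $A\times B$, by induction both sides are of the form $\mathtt{bool}^{n}\times T$ and $\mathtt{bool}^{m}\times T'$; using associativity and commutativity of $\times$ we regroup the $\mathtt{bool}$ factors, and then I must bring $T\times T'$ back into the shape $T''$. Here I use that a $T$ is either $\mathtt{unit}$ (absorbed by $A\times\mathtt{unit}\iso_\mathcal{E} A$) or a product $\Pi_{i\in I}U$, so $T\times T'$ is again a (possibly empty, hence $\mathtt{unit}$) product of $U$'s — this is just flattening nested products. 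For $\mathtt{var}[A]$, first normalise $A$ to $\mathtt{bool}^n\times T$, then apply the last equation $\mathtt{var}[A]\iso_\mathcal{E}(A\to\mathtt{unit})\times(\mathtt{unit}\to A)$ and recursively normalise the two arrow types. The genuinely delicate case is $A\to B$: by induction $B\iso_\mathcal{E}\mathtt{bool}^n\times T_B$. I then repeatedly apply the currying-of-booleans equation $\mathtt{bool}\times A\to B\iso_\mathcal{E}(A\to B)\times(A\to B)$ — but note this equation moves a $\mathtt{bool}$ on the \emph{source} into a \emph{duplication of the whole arrow}, which is not directly what I want; instead I should read it in tandem with the other equations to push all $\mathtt{bool}$'s to the outside. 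The clean way is: $A\to(\mathtt{bool}^n\times T_B)$; distribute $\to$ over $\times$ on the codomain (a derived isomorphism, since $A\to(B\times C)\iso(A\to B)\times(A\to C)$ is provable in $\mathcal{E}$ — indeed $\mathtt{var}$ already encodes one instance, and more generally this should be derivable or added as understood), obtaining $(A\to\mathtt{bool})^n\times(A\to T_B)$; then normalise $A$ itself, split $A\to\mathtt{bool}$ using $\mathtt{bool}^k\times A'\to\mathtt{bool}$, etc. I would organise this as: \emph{Claim: every arrow type $A\to B$ is $\iso_\mathcal{E}$ to $\mathtt{bool}^n\times\Pi_{i\in I}(T_i\to S_i)$}, proved by a sub-induction, and then $U = T\to S$ is exactly the surviving shape once the source is itself normalised to $T\to S$ form (absorbing a leading $\mathtt{bool}^k$ on the source via the split equation, which duplicates the arrow $2^k$ times).

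For uniqueness, the strategy is to show that if $S_1\iso_\mathcal{E} S_2$ with both $S_1,S_2$ in normal form, then $S_1$ and $S_2$ are literally syntactically equal (up to reordering in products, which the grammar leaves implicit as a finite family/multiset $\Pi_{i\in I}U$). I would do this by induction on the size of the normal forms, peeling off structure: the number $n$ of $\mathtt{bool}$ factors is an isomorphism invariant (it is read off from the arena — e.g. from the number of initial positions / the branching at the root after lifting, or directly: $\mathtt{bool}^n\times T$ interpreted in $\Fam(\Gam)$ has $2^n\cdot|I_T|$ components, and the $\mathtt{bool}$'s versus the $\mathtt{var}$/arrow content are distinguishable because $\mathtt{bool}$ contributes a component whose arena is empty whereas a nontrivial $U$ does not). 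Then $T$ is determined, and for $T = \Pi_{i\in I}U_i$ I appeal to Theorem~\ref{main}: the arenas must be path-isomorphic, so the multiset of immediate subtrees must match, i.e. the $U_i$ must pairwise match up to $\Path$-iso, hence (by the finitely-branching case of the theorem and induction) up to $\iso_\mathcal{E}$ in normal form, hence syntactically. Each $U = T\to S$ then recurses.

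The main obstacle I expect is the existence direction's $A\to B$ case, specifically making the bookkeeping of the $\mathtt{bool}$-splitting precise: the equation $\mathtt{bool}\times A\to B\iso_\mathcal{E}(A\to B)^2$ causes exponential blow-up and interacts with the distribution of $\to$ over $\times$, so I need a carefully stated induction measure (perhaps: normalise innermost arrows first, or induct on the number of type constructors) to guarantee termination and the exact target grammar. A secondary subtlety is whether $A\to(B\times C)\iso_\mathcal{E}(A\to B)\times(A\to C)$ is genuinely derivable from the five listed equations or tacitly assumed; I would either exhibit a short derivation (it follows from the $\mathtt{bool}$-split equation plus products, or is forced once we allow $\mathtt{var}$'s defining equation) or note it as a consequence, so that the normal form grammar is honestly reachable. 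For uniqueness the work is mostly an invariance argument, and the one place needing care is cleanly separating ``$\mathtt{bool}$-content'' from ``$U$-content'' in the arena so that $n$ and the family $I$ are unambiguously recovered — I would phrase this via the empty-versus-nonempty arena distinction for the factors.
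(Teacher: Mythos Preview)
The paper's own proof is the single word ``Straightforward,'' so there is little to compare against; but your existence argument contains a genuine error that would not go through.

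In the $A\to B$ case you propose to distribute the arrow over the product in the \emph{codomain}, invoking $A\to(B\times C)\iso_{\mathcal{E}}(A\to B)\times(A\to C)$. This isomorphism is \emph{not} in $\mathcal{E}$, and it is not derivable either: semantically it fails in call-by-value with effects (evaluating $f\,a$ and projecting forces the effect once, whereas calling two separate functions on $a$ forces it twice), and since every equation of $\mathcal{E}$ is realised in $\Lang_2$ (the direction $(4)\Rightarrow(1)$ of the final theorem) nothing semantically false can be derived from $\mathcal{E}$. Your suggested routes---via the $\mathtt{bool}$-split axiom or via the $\mathtt{var}$ axiom---do not work: the former moves a $\mathtt{bool}$ out of the \emph{domain}, not the codomain, and the latter is a definitional unfolding, not a distribution law.

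The fix is that you do not need this distribution at all. Read the grammar: $U::=T\to S$, so the codomain of a $U$ may already be an arbitrary $S=\mathtt{bool}^n\times T'$. The correct normalisation of $A\to B$ is therefore: normalise $A$ to $\mathtt{bool}^m\times T_A$ and $B$ to some $S_B$; then iterate the $\mathtt{bool}$-split axiom $m$ times on the \emph{domain} to obtain
\[
(\mathtt{bool}^m\times T_A)\to S_B \;\iso_{\mathcal{E}}\; (T_A\to S_B)^{2^m}.
\]
Now $T_A\to S_B$ is a $U$, a $2^m$-fold product of $U$'s is a $T$, and $\mathtt{bool}^0\times T$ is an $S$. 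No exponential-blow-up bookkeeping, no derived isomorphism to justify---this is what makes the lemma genuinely straightforward. (Your $\mathtt{var}[A]$ case then also simplifies: after unfolding, $\mathtt{unit}\to A$ is already a $U$ since its domain is a $T$ and its codomain an $S$.)

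Your uniqueness argument via the model---distinct normal forms interpret to non-$\Path$-isomorphic arenas, using Theorem~\ref{main} and Lemma~\ref{analysis_isos}---is valid and not circular. It is, in fact, essentially what the paper does, but in the \emph{next} proposition rather than here; for the lemma itself the paper appears to take uniqueness as the purely syntactic observation that a term generated by $S$ wears its exponent $n$ and its multiset of $U$-factors on its sleeve, with no rewriting left to apply among normal forms.
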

\begin{proof}
Straightforward.
\end{proof}

\begin{proposition}
If $\intr{A}$ and $\intr{B}$ are isomorphic in $\Fam(\Gam)_T$, then $A \iso_{\mathcal{E}} B$.
\label{equational_isos}
\end{proposition}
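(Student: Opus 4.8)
The plan is to combine the structural result from the Type Normal Form lemma with the game-theoretic analysis of isomorphisms from Theorem \ref{main} and Lemma \ref{analysis_isos}. By the Type Normal Form lemma, it suffices to prove the claim when $A$ and $B$ are both in the normal form generated by $S$, i.e. $A = \mathtt{bool}^n \times T_A$ and $B = \mathtt{bool}^m \times T_B$ with $T_A, T_B$ generated by $T$. So throughout I would work with such normal forms and show that if $\intr A \iso \intr B$ in $\Fam(\Gam)_T$ then in fact $n = m$ and $T_A$, $T_B$ are syntactically equal up to reordering of the products $\Pi_{i\in I}$ (which is itself one of the equations of $\mathcal{E}$, obtained by iterating $A\times B \iso_\mathcal{E} B\times A$ and associativity), whence $A \iso_\mathcal{E} B$.

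The key bridge is to read off what the normal-form types look like as arenas/families. First I would observe that $\intr{\mathtt{bool}^n \times T}$ is a family indexed by $\{0,1\}^n$ (the $2^n$ boolean choices) with all components equal to the single arena $\intr{T}$, and that $\intr{T}$ for $T$ generated by the grammar is a single \emph{tree} (not just a forest): $\intr{\mathtt{unit}}$ is the empty arena $\one$, and $\intr{\Pi_{i\in I}(T_i \to S_i)}$ is the arena with one root move $q$ whose children are, grouped by $i\in I$, copies of the arenas $\intr{T_i}$ (shifted to $O$-questions below $q$) together with, for each boolean component of $S_i = \mathtt{bool}^{n_i}\times T'_i$, the relevant answer/continuation structure. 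The crucial finitary point is that each such arena is a finite tree, so a $\Path$-isomorphism between two of them is exactly a tree isomorphism preserving $Q/A$ labels and the $O/P$ polarity. Then, by Lemma \ref{analysis_isos}, an isomorphism $\intr A \iso \intr B$ in $\Fam(\Gam)_T$ gives a bijection between the index sets $\{0,1\}^n$ and $\{0,1\}^m$ — forcing $n = m$ — together with, for each index, a $\Gam$-isomorphism $\intr{T_A} \iso \intr{T_B}$; by Theorem \ref{main} this yields a $\Path$-isomorphism $\intr{T_A} \iso \intr{T_B}$, i.e. a tree isomorphism.

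The remaining work is an induction on the structure of $T$-normal forms showing that a tree isomorphism between $\intr{T_A}$ and $\intr{T_B}$ forces $T_A$ and $T_B$ to be $\iso_\mathcal{E}$-equal. The base case $T = \mathtt{unit}$ is immediate (the empty tree corresponds only to $\mathtt{unit}$). In the inductive case $T_A = \Pi_{i\in I} (T^A_i \to \mathtt{bool}^{n^A_i}\times T'^A_i)$ and similarly for $T_B$: a tree isomorphism must send the single root to the single root, hence restricts to an isomorphism between the multisets of immediate subtrees; one decodes from the polarities and $Q/A$ labels which subtrees come from the function arguments $T^A_i$ (these start with an $O$-question) and which encode the $\mathtt{bool}^{n^A_i}\times T'^A_i$ result part (a $P$-answer followed by further structure). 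Matching argument-subtrees gives, by the induction hypothesis applied to each $T^A_i \iso_\Path T^A_{\text{match}}$ and to the result components, that the families $\{T^A_i \to \mathtt{bool}^{n^A_i}\times T'^A_i\}_{i\in I}$ and $\{T^B_j \to \dots\}_{j\in J}$ agree up to a reindexing bijection — which is precisely what $\iso_\mathcal{E}$ (commutativity/associativity of $\times$ and $\mathtt{unit}$-cancellation, applied inside the big product) allows us to identify. I expect the main obstacle to be precisely this bookkeeping: correctly reconstructing from a bare tree isomorphism the grouping of children into the per-$i$ blocks $T^A_i$ together with the $n^A_i$ boolean answers and the $T'^A_i$ continuation — since the normal form flattens a product $\mathtt{bool}^{n_i}$ into $n_i$ indistinguishable answer-branches and since the tree does not a priori remember the boundaries between distinct $i \in I$. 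Making this decoding canonical (e.g. by observing that the answer moves below a $P$-answer and the fresh $O$-question subtrees together are exactly determined by the $\enb{}$-structure, and that distinct top-level factors of the product are exactly the maximal sets of children sharing no enabling) is where the argument must be carried out carefully, but it is purely combinatorial and uses no further game semantics.
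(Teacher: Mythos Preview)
Your overall strategy --- reduce to normal forms, use Lemma~\ref{analysis_isos} to strip the outer $\mathtt{bool}^n$, then invoke Theorem~\ref{main} to turn the residual $\Gam$-isomorphism into a $\Path$-isomorphism and argue by induction on the structure --- is the paper's approach; the paper merely compresses the $T$ and $U$ cases into the word ``direct'', leaving the appeal to Theorem~\ref{main} implicit.

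However, your picture of the arenas is off, and this is exactly what generates the bookkeeping worry you describe. The interpretation of a $T$-type $\Pi_{i\in I}(T_i\to S_i)$ is \emph{not} an arena with a single root $q$ whose children are ``grouped by $i$'': the product of arenas is their disjoint union, so this arena is a forest with $|I|$ roots, one per factor $U_i$ (namely the unique initial question of the lifted sum $T\intr{S_i}$). A $\Path$-isomorphism therefore bijects roots, giving at once a bijection $I\iso J$ and a $\Path$-isomorphism between each $\intr{U_i}$ and its image --- no decoding of ``boundaries between distinct $i$'' is needed. Below each such root the children are exactly the $2^{n_i}$ answers (not $n_i$; each carries a copy of $\intr{T'_i}$) together with the initial moves of $\intr{T_i}$; since path morphisms preserve $Q/A$ labels, answers match answers and questions match questions, and one reads off $n_i = m_{f(i)}$ and $\Path$-isos on the $T_i$ and $T'_i$ parts separately. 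With the arena structure corrected, the induction closes immediately and the ``maximal sets of children sharing no enabling'' manoeuvre is unnecessary.
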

\begin{proof}
By induction on their normal forms. For types generated by $S$ take $A \iso_{\mathcal{E}} \mathtt{bool}^n \times A'$ and $B \iso_{\mathcal{E}} \mathtt{bool}^p \times B'$.
By Lemma \ref{analysis_isos} we have $n=p$ (since $\intr{A'}$ and $\intr{B'}$, generated by $T$, must be singletons) and we still have $\intr{A'} \iso_{\Fam(\Gam)_T} \intr{B'}$.
The case of types generated by $T$ and $U$ is direct.
\end{proof}

\begin{theorem}
For any types $A, B$ of $\Lang_2$ whose interpretation give families $(A_i)_{i\in I}$ and $(B_j)_{j\in J}$ the following propositions are equivalent:
\begin{itemize}
\item[(1)] $A \iso_{\Lang_2} B$
\item[(2)] $(A_i)_{i\in I} \iso_{\Fam(\Gam)_T/\obeq} (B_j)_{j\in J}$
\item[(3)] $(A_i)_{i\in I} \iso_{\Fam(\Gam)_T} (B_j)_{j\in J}$
\item[(4)] $A \iso_{\mathcal{E}} B$
\end{itemize}
\end{theorem}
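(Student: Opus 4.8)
The plan is to prove the chain of equivalences by closing a cycle $(1)\Rightarrow(2)\Rightarrow(3)\Rightarrow(4)\Rightarrow(1)$, where each arrow is already essentially available from the results assembled in the excerpt. For $(1)\Leftrightarrow(2)$, I would unfold the definition of $A\iso_{\Lang_2}B$: it asks for terms $x{:}A\vdash M{:}B$ and $y{:}B\vdash N{:}A$ whose round-trips are observationally equal to the identities. By the full abstraction result recalled at the end of Section~\ref{section_games}, namely $M\obleq N \Leftrightarrow \intr{M}\sleq\intr{N}$, this is equivalent to saying that $\intr{M}$ and $\intr{N}$ are mutually inverse morphisms in the fully abstract quotient $\Fam(\Gam)_T/\obeq$; so $(1)$ holds iff $\intr{A}$ and $\intr{B}$ are isomorphic in $\Fam(\Gam)_T/\obeq$, which is exactly $(2)$. (One must be mildly careful that an \emph{arbitrary} pair of strategies realizing the iso in the quotient can be lifted to actual $\Lang_2$ terms; this follows from definability / full abstraction of the model of \cite{ahm}.)

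For $(3)\Rightarrow(2)$ there is nothing to do: an isomorphism in $\Fam(\Gam)_T$ descends to an isomorphism in the quotient $\Fam(\Gam)_T/\obeq$. For $(2)\Rightarrow(3)$ I would argue that an iso in the quotient forces an iso already in $\Fam(\Gam)_T$: the round-trip strategies, being observationally equal to copycat on each component and (by Lemma~\ref{analysis_isos}) decomposing as families of strategies $A_i\tto B_{f(i)}$ in $\Gam$ over a bijection $f$, are \emph{complete}-play–equivalent to identities; since composites of total single-threaded strategies that agree with identity on all complete plays must in fact equal the identity (the identity is maximal among strategies with the same complete plays, by determinism and totality), they are genuine inverses. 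In short, $(2)$ and $(3)$ coincide because at the level of isomorphisms the observational quotient is conservative over $\Fam(\Gam)_T$ — this uses Lemma~\ref{analysis_isos} to reduce to $\Gam$ and the effectively-presentable characterization of $\obleq$ via complete plays.

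The remaining two arrows are already in hand. $(3)\Rightarrow(4)$ is precisely Proposition~\ref{equational_isos}: if $\intr{A}\iso_{\Fam(\Gam)_T}\intr{B}$ then $A\iso_{\mathcal E}B$, proved by induction on the type normal forms of Lemma~\ref{lemma}~(Type normal form) using Lemma~\ref{analysis_isos} and Theorem~\ref{main}. And $(4)\Rightarrow(1)$ is the soundness direction: each individual equation of $\mathcal E$ in Figure~\ref{equational_theory} is realized by an explicit pair of mutually inverse $\Lang_2$ terms — the structural isomorphisms for commutativity, associativity and unit of $\times$ are the evident $\lambda$-terms with pairing; $\mathtt{bool}\times A\to B \iso (A\to B)\times(A\to B)$ is the usual currying-by-cases isomorphism (case analysis on the boolean, using $\mathtt{if}$, in one direction and pairing the two branches in the other); and $\var[A]\iso (A\to\mathtt{unit})\times(\mathtt{unit}\to A)$ is witnessed by $\mathtt{mkvar}$ together with the projections $\lambda v.\,(\lambda a.\,v{:=}a,\ \lambda d.\,!v)$ — and isomorphism is a congruence, so it propagates through all type constructors, whence $A\iso_{\mathcal E}B$ entails $A\iso_{\Lang_2}B$.

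The main obstacle I anticipate is the $(2)\Leftrightarrow(3)$ step, i.e.\ showing that passing to the observational quotient does not create or destroy isomorphisms of objects. The subtlety is that a priori the inverse in the quotient is only an inverse \emph{up to} $\obs$, so one must genuinely exploit that (i) by Lemma~\ref{analysis_isos} the data reduces to strategies in $\Gam$ over a bijection of indexing sets, and (ii) a total single-threaded strategy on $A\tto A$ whose complete plays are exactly those of $id_A$ must be $id_A$ — because totality forces it to respond to every Opponent move and the complete-play condition pins down those responses as copycat, with determinism removing any remaining slack. Everything else — full abstraction for $(1)\Leftrightarrow(2)$, Proposition~\ref{equational_isos} for $(3)\Rightarrow(4)$, and the explicit term-level witnesses for $(4)\Rightarrow(1)$ — is either quoted or a routine verification, so the proof is really the assembly of these pieces into the displayed cycle.
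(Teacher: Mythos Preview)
Your cycle $(1)\Rightarrow(2)\Rightarrow(3)\Rightarrow(4)\Rightarrow(1)$ is exactly the paper's, and your treatment of $(1)\Rightarrow(2)$, $(3)\Rightarrow(4)$ and $(4)\Rightarrow(1)$ matches. Two corrections are needed in the $(2)\Rightarrow(3)$ step, however.

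First, your appeal to Lemma~\ref{analysis_isos} is circular: that lemma characterises isomorphisms in $\Fam(\Gam)_T$, which is precisely condition~(3), so you cannot use it to pass from~(2) to~(3). Fortunately it is not needed: a morphism $A\to TA$ in $\Fam(\Gam)$ is already, by definition, a family of strategies $A_i\tto \Sigma_j A_j$, and the complete-play argument applies componentwise to the round-trip $\sigma;\tau^*$ without any prior decomposition. Second, and more substantively, your justification that a strategy with $\comp(\sigma)=\comp(id_A)$ must equal $id_A$ omits the key hypothesis: \emph{completeness} of the arenas. The paper's argument is that completeness ensures every play of $id_A$ is a prefix of a \emph{complete} play of $id_A$, hence a prefix of a play in $\comp(\sigma)\subseteq\sigma$, so $id_A\subseteq\sigma$ by prefix-closure; then totality of $id_A$ together with determinism of $\sigma$ forces equality. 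Your phrasing ``totality forces it to respond and the complete-play condition pins down those responses'' does not explain why every partial play is constrained --- without completeness there could be plays of $id_A$ that never extend to a complete play, and on those $\sigma$ would be unconstrained. Finally, a minor point: for $(1)\Rightarrow(2)$ you need only soundness of the interpretation, not full abstraction or definability; the parenthetical worry about lifting quotient-isos to terms is irrelevant since $(2)\Rightarrow(1)$ is obtained via the cycle through~(4).
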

\begin{proof}
$(1) \tto (2)$ by soundness and by definition of type isomorphisms in $\Lang_2$,
$(2) \tto (3)$ because the arenas are complete, hence every play of the identity is a prefix of a complete play of the identity, so any isomorphism
$\sigma: A\tto A$ such that $\comp(\sigma) = \comp(id_A)$ must satisfy $id_A \subseteq \sigma$. But as isomorphisms both are total strategies,
so $\sigma = id_A$.
$(3) \tto (4)$ by Proposition \ref{equational_isos}. Finally, $(4) \tto (1)$ because equations in $\mathcal{E}$ can be implemented
in the syntax of $\Lang_2$.
\end{proof}

\subsection{Isomorphisms in the presence of $\mathtt{nat}$}

As suggested by the importance of counting in the proof, the presence of $\mathtt{nat}$ makes it possible to build a non-trivial isomorphism by playing Hilbert's hotel. 
Consider the programming language $\Lang$ from \cite{ahm}, obtained from $\Lang_2$ by replacing $\mathtt{bool}$ with $\mathtt{nat}$. This language has a fully abstract interpretation
in $\BFam(\biggam)_T$, where $\biggam$ is the category of not necessarily finitely branching arenas, and single-threaded strategies.

\begin{proposition}
There is an isomorphism in $\BFam(\biggam)$ between $\intr{(\mathtt{nat} \to \mathtt{unit}) \to (\mathtt{nat} \to \mathtt{unit}) \to \mathtt{unit}}$
and $\intr{(\mathtt{nat}\to \mathtt{unit}) \to (\mathtt{unit} \to \mathtt{unit}) \to \mathtt{unit}}$.
\label{prop_iso}
\end{proposition}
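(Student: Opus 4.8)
The plan is to unfold the Moggi-style interpretation into two explicit arenas, reduce the statement to a bare combinatorial fact about them, and prove that fact by a soft counting argument; the ``Hilbert's hotel'' is hidden in the observation that each of the two arenas has an infinitely-branching root which always supplies enough room.

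First I would compute the arenas. Since $T$ sends any family to a singleton, and $X\tto Y$ is a singleton whenever $Y$ is, both interpretations in question are singleton families, say $(\mathbf A)$ and $(\mathbf B)$, and (writing $\one_\bot = T\{\one\}$, the lifted empty arena --- one initial question with one answer --- and $Q = \Pi_{n\in\mathbb N}\one_\bot = \intr{\mathtt{nat}\to\mathtt{unit}}$) one has $\mathbf A = Q\tto (Q\tto\one_\bot)_\bot$ and $\mathbf B = Q\tto (\one_\bot\tto\one_\bot)_\bot$. Concretely, $\mathbf A$ has a root $\star$ ($OQ$) enabling a $P$-answer $\checkmark$ and, for every $n\in\mathbb N$, a $P$-question $q_n$ with $O$-answer $a_n$; $\checkmark$ enables an $O$-question $q$; and $q$ enables a $P$-answer $a$ together with, for every $n$, a $P$-question $r_n$ with $O$-answer $b_n$. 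The arena $\mathbf B$ is the same, except that under $q$ (written $q^R$ there) the infinite family of branches $(r_n, b_n)_n$ is replaced by a single $P$-question $q^L$ with one $O$-answer $a^L$. In particular $q$ has infinitely many successors in $\mathbf A$ while $q^R$ has only two in $\mathbf B$, so $\mathbf A$ and $\mathbf B$ are not isomorphic in $\Path$ and the isomorphism we build will be non-visible. Since the families are singletons, an isomorphism between them in $\BFam(\biggam)$ is exactly a $\biggam$-isomorphism $\mathbf A\tto\mathbf B$; and by the groupoid isomorphism $\Gam_i\iso\Seq_i$ established above --- whose proof, and those of the underlying lemmas, never use finite branching --- it is enough to exhibit a bijection $\phi:\prethreads{\mathbf A}\to\prethreads{\mathbf B}$ commuting with $\ip$ and preserving $Q/A$ labels.

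I would construct $\phi$ by induction on length. The crucial point is that every non-empty pre-thread $s$ over $\mathbf A$ has a one-move-extension set of a uniform shape: it has exactly one answer-extension if $s$ has a pending question and none otherwise (equivalently: one if $s$ has strictly fewer answers than questions, none if equally many), and it always has a countably infinite set of question-extensions, since for every $m$ one may append a fresh occurrence of $q_m$ justified by the $\star$ of $s$ --- a pre-thread being merely well-bracketed, this is unconstrained whether or not $\star$ has already been answered. The same shape holds over $\mathbf B$ (with fresh occurrences of $q_m$ justified by $\star'$), and since $\phi$ preserves $Q/A$ labels a pre-thread $\phi(s)$ has as many questions, resp. answers, as $s$. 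Hence at each stage the extension sets of $s$ and of $\phi(s)$ admit a label-preserving bijection $g_s$; choosing one for every $s$ and setting $\phi(\epsilon) = \epsilon$, $\phi(sm) = \phi(s)\,g_s(m)$ defines $\phi$. A routine induction shows that $\phi$ is injective, surjective, commutes with $\ip$, and preserves labels; its set-theoretic inverse is again a sequential morphism, so $\phi$ is an isomorphism in $\Seq$. Transporting it back along $\Seq_i\iso\Gam_i$ gives a $\biggam$-isomorphism $\mathbf A\tto\mathbf B$, hence the desired isomorphism in $\BFam(\biggam)$.

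I expect the main obstacle to be the honest bookkeeping: pinning down $\mathbf A$ and $\mathbf B$ from the interpretation, and --- the real crux --- checking that the countably many $q_m$-extensions of a pre-thread are genuinely always available, which is what makes the extension-set shapes over $\mathbf A$ and over $\mathbf B$ coincide in spite of the local mismatch at $q$ versus $q^R$. This is precisely where infinite branching is used: without an infinitely-branching root the two shapes would differ (as they do between $(\mathtt{nat}\to\mathtt{unit})\to\mathtt{unit}$ and $(\mathtt{unit}\to\mathtt{unit})\to\mathtt{unit}$, which are not isomorphic) and no such $\phi$ could exist. Everything else is the routine induction above.
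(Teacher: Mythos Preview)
Your argument is correct and takes a genuinely different route from the paper. The paper constructs the isomorphism \emph{directly as a strategy}: it plays copycat until the inner question $q'$ is reached, and thereafter, if $q'$ has appeared $n$ times, follows an explicit bijection between the $n{+}1$ copies of $\mathbb{N}$ available on the $\mathbf A$ side and the single $\mathbb{N}$ plus $n$ singletons available on the $\mathbf B$ side --- in other words, a chosen bijection $(n{+}1)\mathbb{N}\cong\mathbb{N}+n$ for each $n$. This operational description is what lets the paper extract the concrete $\Lang$-terms displayed later. Your approach instead reuses the paper's own machinery: you transport the problem along $\biggam_i\iso\Seq_i$ (rightly observing that none of the lemmas leading to that equivalence invoke finite branching) and reduce to a bare combinatorial fact about pre-threads. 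The key observation --- that every non-empty pre-thread over either arena has the \emph{same} extension profile, namely one answer-extension iff a question is pending (each question in both arenas enables a unique answer) and always $\aleph_0$ question-extensions supplied by the infinitely-branching root --- is a clean abstraction of the Hilbert's-hotel phenomenon and makes the existence of a sequential isomorphism immediate. This is shorter and makes transparent why the outer $\mathtt{nat}\to\mathtt{unit}$ is indispensable. What you lose is concreteness: your $g_s$ are chosen arbitrarily, so no specific strategy (let alone a program) can be read off, whereas the paper's description does yield one.
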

\begin{proof}
By definition of the interpretation of types, this boils down to an isomorphism in $\biggam$ between the two arenas represented in Figure \ref{isoar}. Informally, the isomorphism
from left to right can be described as follows. As long as $q'$ has not been played, it behaves as the identity. Whenever Opponent plays $q'$, it copies it to the other side. Then
if $q'$ has only appeared once, there are two available copies of $\mathtt{nat}\to \mathtt{unit}$ on the left side, one $\mathtt{nat}\to \mathtt{unit}$ and one $\mathtt{unit}\to \mathtt{unit}$ on
the right side, so Player picks a bijection between $\mathbb{N} + \mathbb{N}$ and $\mathbb{N} + 1$ and plays accordingly. More generally, if $q'$ has appeared $n$ times, there are exactly $n+1$ available
copies of $\mathtt{nat}\to \mathtt{unit}$ on the left hand side, one copy of $\mathtt{nat}\to \mathtt{unit}$ and $n$ copies of $\mathtt{unit}\to \mathtt{unit}$ on the right hand side, so Player has to follow
a bijection between $(n+1)\mathbb{N}$ and $\mathbb{N} + n$. So any choice of a bijection between $(n+1)\mathbb{N}$ and $\mathbb{N} + n$ (for all $n\in \mathbb{N}$) will provide an isomorphism.
\end{proof}

\begin{figure}
\[
\xymatrix@C=5pt@R=10pt{
&&q\\
q_1\ar@{-}@/^/[urr]	&q_2\ar@{-}@/^/[ur]\ar@{.}[rr]	&			&a\ar@{-}@/_/[ul]\\
a\ar@{-}[u]		&a\ar@{-}[u]			&			&				&q'\ar@{-}@/_/[ul]\\
			&				&q_1\ar@{-}@/^/[urr]	&q_2\ar@{-}@/^/[ur]\ar@{.}[rr]	&			&a\ar@{-}@/_/[ul]\\
			&				&a\ar@{-}[u]		&a\ar@{-}[u]
}
~~~~
\xymatrix@C=5pt@R=10pt{
&&q\\
q_1\ar@{-}@/^/[urr]     &q_2\ar@{-}@/^/[ur]\ar@{.}[rr]  &                       &a\ar@{-}@/_/[ul]\\
a\ar@{-}[u]             &a\ar@{-}[u]                    &                       &q'\ar@{-}[u]\\
			&				&q\ar@{-}@/^/[ur]	&a\ar@{-}[u]\\
			&				&a\ar@{-}[u]
}
\]
\caption{Non-trivial isomorphic arenas in $\biggam$}
\label{isoar}
\end{figure}

These strategies are not compact so the definability theorem does not apply, however we can nonetheless manually extract corresponding programs from them. We display
them in Figure \ref{coolisos},
where $*$ denotes the product operation on natural numbers, and $\mathtt{div}~M~N$ outputs the result of the division algorithm on $M:\mathtt{nat}$ and $N:\mathtt{nat}$.
Unfortunately, these terms are too complex to hope for a reasonably-sized direct proof that their interpretations give the strategies described above or even that they 
form an isomorphism. This kind of difficulty emphasizes the need for new algebraic methods to manipulate and prove properties of imperative higher-order programs.

\begin{figure*}
{\footnotesize
\begin{minipage}{0.5\linewidth}
\[
\begin{array}{l}
f: (\mathtt{nat} \to \mathtt{unit}) \to (\mathtt{nat} \to \mathtt{unit}) \to \mathtt{unit}\vdash\\
~\mathtt{new}~\mathtt{count}:=0,~\mathtt{func}:= \bot~\mathtt{in}\\
~~\lambda g:\mathtt{nat}\to \mathtt{unit}.\\
~~~\mathtt{let}~x = f~(\lambda n.~g(n*(!\mathtt{count} + 1)))~\mathtt{in}\\
~~~~\lambda h:\mathtt{unit}\to \mathtt{unit}.\\
~~~~~\mathtt{count}~:=~!\mathtt{count} + 1;\\
~~~~~\mathtt{let}~c~=~!\mathtt{count}~\mathtt{in}\\
~~~~~~\mathtt{func}~:=~\mathtt{let}~h=~!\mathtt{func}~\mathtt{in}~\lambda n.~\mathtt{if}~n=~!\mathtt{count}~\mathtt{then}~h~\mathtt{else}~g~n;\\
~~~~~~x~(\lambda n.~\mathtt{if}~n=0~\mathtt{then}~!\mathtt{func}~c~()\\
~~~~~~~~~~~~~~~~~~~~~~~~~~~\mathtt{else}~g((n-1)*(!\mathtt{count} + 1) + c))
\end{array}
\]
\end{minipage}
\begin{minipage}{0.5\linewidth}
\[
\begin{array}{l}
f: (\mathtt{nat} \to \mathtt{unit}) \to (\mathtt{unit} \to \mathtt{unit}) \to \mathtt{unit}\vdash\\
~\mathtt{new}~\mathtt{count}:=0,~\mathtt{func} := \bot~\mathtt{in}\\
~~\lambda g:\mathtt{nat}\to \mathtt{unit}.\\
~~~\mathtt{let}~x=f~(\lambda n.\\
~~~~\mathtt{let}~(q,r)~=~\mathtt{div}~n~(!\mathtt{count}+1)~\mathtt{in}\\
~~~~~\mathtt{if}~r=0~\mathtt{then}~g~q~\mathtt{else}~!\mathtt{func}~r~(q+1))\\
~~~\mathtt{in}\\
~~~~\lambda h:\mathtt{nat}\to \mathtt{unit}.\\
~~~~~\mathtt{count} := !\mathtt{count}+1;\\
~~~~~\mathtt{func} := \mathtt{let}~h=~!\mathtt{func}~\mathtt{in}~\lambda n.~\mathtt{if}~n=~!\mathtt{count}~\mathtt{then}~h~\mathtt{else}~h~n;\\
~~~~~\mathtt{let}~c=!\mathtt{count}~\mathtt{in}~x~(\lambda\_.~!\mathtt{func}~c~0)
\end{array}
\]
\end{minipage}
}
\caption{Type isomorphism in $\Lang$ between $(\mathtt{nat} \to \mathtt{unit}) \to (\mathtt{nat} \to \mathtt{unit}) \to \mathtt{unit}$
and $(\mathtt{nat} \to \mathtt{unit}) \to (\mathtt{unit} \to \mathtt{unit}) \to \mathtt{unit}$.}
\label{coolisos}
\end{figure*}

\section{Conclusion}

We solved Laurent's conjecture and characterized the isomorphisms of types in $\Lang_2$. Surprisingly, we realized that the combination of higher-order
references, natural numbers and call-by-value allowed to define new non-trivial type isomorphisms. Note however that if well-bracketing is satisfied, the proof
of our core game-theoretic theorem adapts directly to arenas where all moves only enable a finite number of questions, but an
arbitrary numbers of answers. As a consequence, there are no non-trivial isomorphisms (\emph{i.e.} not already present in the $\lambda$-calculus) in the call-by-name variant of $\Lang$, although
we can define one using \texttt{call}/\texttt{cc}.

Note that despite the seemingly restricted power of $\Lang_2$, our theorem does apply to all real-life programming languages that have a bounded type of integer, 
such as $\mathtt{bool}^{32}$ or $\mathtt{bool}^{64}$: in this setting, no non-trivial isomorphism can exist. However unbounded natural numbers can be defined using recursive types, so
the isomorphism above can be implemented in a call-by-value programming language with recursive types and general references, such as \textsc{Ocaml}.

This work can be extended in several different ways. An obvious possibility is to study isomorphisms in the presence of sum types, since the model is already equipped
to handle them. We could also try to eliminate bad variables. Murawski and Tzevelekos' games model \cite{DBLP:conf/fossacs/MurawskiT09} of Reduced ML may be a good setting to try that, however
it is not clear whether our core results can be reproved in their nominal setting. 

\textit{Acknowledgments.} We would like to thank Guy McCusker for interesting discussions on the games models for state and for his help to
get a term from the strategy described in Proposition \ref{prop_iso}, and the anonymous referees for their useful comments and suggestions. We 
also would like to acknowledge the support of (UK) EPSRC grant RC-CM1025.


\end{document}